%% file: main.tex
\documentclass[acmtog]{acmart}

\AtBeginDocument{%
  }
\input{macros}

\usepackage{math}

\copyrightyear{2026}
\acmYear{2026}
\setcopyright{cc}
\setcctype{by}
\acmConference[SIGGRAPH Conference Papers '26]{Special Interest Group on Computer Graphics and Interactive Techniques Conference Conference Papers}{July 19--23, 2026}{Los Angeles, CA, USA}
\acmBooktitle{Special Interest Group on Computer Graphics and Interactive Techniques Conference Conference Papers (SIGGRAPH Conference Papers '26), July 19--23, 2026, Los Angeles, CA, USA}
\acmDOI{10.1145/3799902.3811195}
\acmISBN{979-8-4007-2554-8/2026/07}

\acmSubmissionID{1165}

\begin{document}

\title[MeshFlow]{MeshFlow: Mesh Generation with Equivariant Flow Matching}

\author{Qi Sun}
\email{qisun.new@gmail.com}
\affiliation{%
  \institution{City University of Hong Kong}
  \country{Hong Kong}
}

\author{Kiyohiro Nakayama}
\affiliation{%
  \institution{Stanford University}
  \country{USA}}

\author{Jing Nathan Yan}
\affiliation{%
  \institution{Cornell Tech}
  \country{USA}}

\author{Qixing Huang}
\affiliation{%
  \institution{The University of Texas, Austin}
  \country{USA}}
  
\author{Alexander Rush}
\affiliation{%
  \institution{Cornell Tech}
  \country{USA}}

\author{Leonidas Guibas}
\affiliation{%
  \institution{Stanford University}
  \country{USA}}

\author{Gordon Wetzstein}
\affiliation{%
  \institution{Stanford University}
  \country{USA}}

\author{Jing Liao}
\authornote{Corresponding author}
\affiliation{%
  \institution{City University of Hong Kong}
  \country{Hong Kong}}

\author{Guandao Yang}
\affiliation{%
  \institution{The University of Texas, Austin}
  \country{USA}
}


\input{secs/1_abstract}

\begin{CCSXML}
<ccs2012>
   <concept>
       <concept_id>10010147.10010371.10010396</concept_id>
       <concept_desc>Computing methodologies~Shape modeling</concept_desc>
       <concept_significance>500</concept_significance>
       </concept>
   <concept>
       <concept_id>10010147.10010257</concept_id>
       <concept_desc>Computing methodologies~Machine learning</concept_desc>
       <concept_significance>500</concept_significance>
       </concept>
 </ccs2012>
\end{CCSXML}

\ccsdesc[500]{Computing methodologies~Shape modeling}
\ccsdesc[500]{Computing methodologies~Machine learning}

\input{figs/teaser}


\keywords{3D Generation, Diffusion Model, Mesh}

\maketitle

\input{secs/2_intro}
\input{secs/3_relatedwork}
\input{secs/4_method}

\input{secs/5_experiments}
\input{secs/6_conclusions}


\bibliographystyle{ACM-Reference-Format}
\bibliography{ref}

\input{secs/8_appendix}


\end{document}

%% file: macros.tex
\usepackage{xcolor}
\usepackage{enumitem}
\usepackage{xspace}
\usepackage{booktabs}
\usepackage{colortbl}
\usepackage{multirow}
\usepackage{makecell}
\usepackage{cleveref}
\usepackage{lipsum} 
\usepackage{gensymb} 
\usepackage{wrapfig}
\usepackage{algorithm}
\usepackage{algpseudocode}
\usepackage{etoc}

\usepackage[labelsep=period]{caption}
\newcommand{\qs}[1]{\textcolor{black}{#1}}

\definecolor{MyDarkRed}{rgb}{0.46, 0.16, 0.16}
\definecolor{MyDarkBlue}{rgb}{0.16, 0.16, 0.66}



%% file: secs/1_abstract.tex
\begin{abstract}
Meshes are among the most common 3D scene representations, but directly generating meshes is challenging largely because the mesh representation contains many structures, such as permutation invariance of vertices or faces. 
To address this challenge, we present a novel approach that learns to generate triangle meshes represented as triangle soups.
We adopt equivariant optimal-transport flow matching models that respect key symmetries within the triangle soup representation, including permutation invariance among faces and among vertices within each of the faces. 
Toward this goal, we propose a simple yet effective modification to the state-of-the-art Diffusion Transformer architecture, resulting in a scalable network capable of modeling a flow field while being equivariant to the desirable symmetries.
Moreover, we introduce a loss function grounded in optimal transport principles that improves model convergence by eliminating training signals that violate these symmetries.
Our model can achieve performance comparable to state-of-the-art auto-regressive mesh generators while providing about an 18× speedup during inference.  
Project page is at \url{https://qiisun.github.io/MeshFlow}.
\end{abstract}

%% file: figs/teaser.tex
\begin{teaserfigure}
  \includegraphics[width=\textwidth, width=\textwidth]{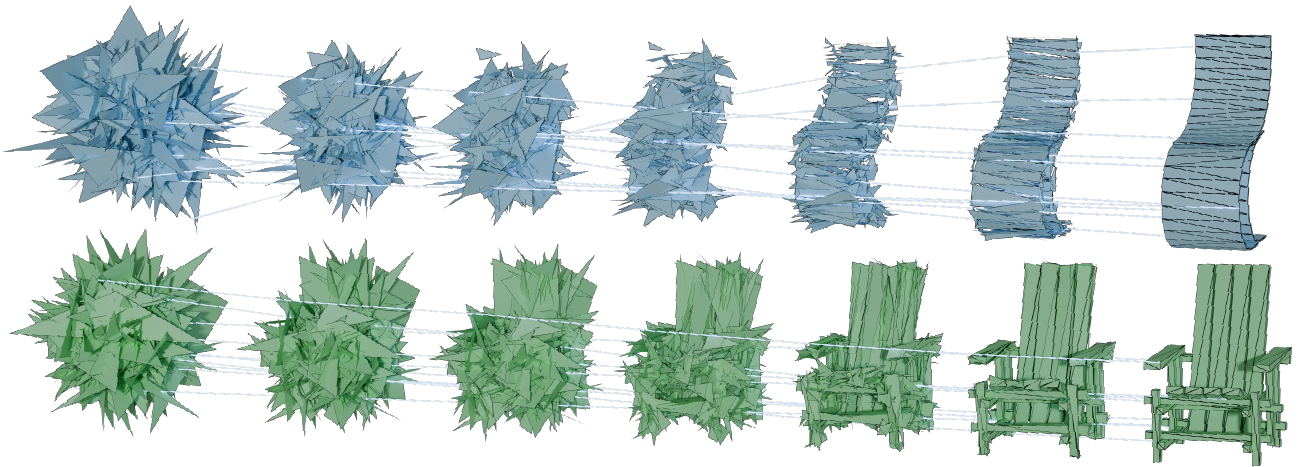}
  \caption{\textbf{MeshFlow} transforms a randomly sampled triangle soup (Left) to a high-quality triangle mesh (Right) in less than 1 second.
  \textbf{MeshFlow} also produces smooth vertex correspondences with minimum crossings, indicated by the lines between triangle soup vertices.
  Each mesh takes less than a second to generate.
  }
  \Description{This is the teaser figure for the article.}
  \label{fig:teaser}
\end{teaserfigure}

%% file: secs/2_intro.tex
\section{Introduction}
\label{sec:intro}

Meshes are among the most widely used representations in computer graphics. 
Many core graphics algorithms in areas such as rendering~\cite{10.5555/3044800}, geometry processing~\cite{Crane:2013:DGP}, and simulation~\cite{kass1993introduction} assume meshes as their main input.
For example, when creating a digital character for animation, artists usually need to carefully mesh regions near joints to enable realistic deformation.
Although various traditional meshing and remeshing tools exist to achieve such goals, they are largely heuristic and still demand careful human intervention to handle different meshing scenarios robustly.
This motivates the study of how to build a mesh generative model that produces meshes that match the quality of artist- and engineer-crafted meshes while flexibly adapting to various conditioning inputs.

A common approach toward this goal is to first generate 3D shapes in alternative representations that are easy to model.
Previous approaches have used point clouds~\cite{zeng2022lion, ShapeGF, spacemesh2024}, implicit functions~\cite{Park_2019_CVPR, occnet, xiang2024structured}, convex primitives~\cite{cvxnet, chen2020bspnet}, or voxels~\cite{ren2024xcube}, as intermediate representations. 
The generated shapes are then converted from the intermediate representation into meshes using algorithms such as Marching Cubes~\cite{marching_cube}.
However, such two-stage pipelines are often limited by the mesh quality of the second stage algorithms, which can create artifacts such as over-tessellated surfaces. 
Consequently, the two-stage method tends to produce meshes that lack the intentional tessellation characteristic of those crafted by human artists and engineers.

An alternative approach is to learn a generative model directly from human-curated mesh datasets.
Recently, several works have taken steps toward this direction by applying autoregressive generative models to serialized mesh representations~\cite{nash2020polygen, siddiqui2024meshgpt}.
These works demonstrate state-of-the-art mesh quality and can create tesselations that are highly similar to those created by human users.
However, these models inherit the fundamental limitations of autoregressive methods when applied to high-dimensional data such as meshes.
They often suffer from slow inference speeds~\cite{lou2023discrete}, difficulty to control~\cite{li2022diffusion}, and error accumulation when generating long sequences~\cite{holtzman2019curious}.

In this work, we propose \textsl{MeshFlow},  a mesh generative model that learns directly from human-created mesh data using a special class of diffusion model - equivariant optimal-transport flow-matching models~\cite{klein2023equivariant}.
Compared to autoregressive models, flow matching has the potential to achieve fast inference speed ~\cite{nie2025largelanguagediffusionmodels,liu2022flow} and can be adapted to take different control signals via techniques such as diffusion posterior sampling~\cite{chung2023diffusion}. 
However, recent exploration in directly applying diffusion models to mesh generation finds it difficult to produce results matching the state-of-the-art auto-regressive mesh generative models.
We hypothesize that a key factor limiting the success of prior diffusion-based methods is their failure to account for the inherent symmetries of faces and vertices in meshes.
To address this, we design a flow-matching model that respects these symmetries by proposing two technical contributions.
First, we present a simple yet effective modification of the Diffusion Transformer architecture~\cite{peebles2023scalablediffusionmodelstransformers}, resulting in a powerful and scalable neural network capable of modeling the velocity field of a triangle soup while maintaining equivariance to its key symmetries.
Second, we introduce a loss function grounded in optimal transport principles to eliminate training signals that violate these symmetries.
Our training objectives enable stable training and faster inference.
Illustrated in~\Cref{fig:teaser}, MeshFlow generates mesh samples from Gaussian noise with a smooth and straight velocity field.

We demonstrate that \textit{MeshFlow} can achieve results \textit{on par in mesh quality} with state-of-the-art mesh generation methods in various ShapeNet categories.
However, \textit{MeshFlow} can produce high-quality meshes in less than a second, which is \textit{18 times} faster than the state-of-the-arts using auto-regressive models. 
These results highlight the potential of our approach to different interactive applications in computer graphics pipelines.

To summarize, the key contributions of our paper include: 
\begin{itemize}
[leftmargin=1.5em]
    \item We propose a novel mesh generation method by applying optimal-transport equivariance flow matching models on triangle soup.
    \item We propose a variant of DiT architecture that respects two types of invariance in triangle soups: 1) invariance to the permutation of faces, and 2) invariance to the cyclic rotation of triangle vertices.
    To improve training convergence, we propose the appropriate optimal transport loss function, which couple each sampled data point with a noise permuted to minimize its distance to the data point.
    \item We demonstrate the effectiveness of our model through experiments on ShapeNet dataset, reaching state-of-the-art mesh generation quality with sub-second sampling time. 
\end{itemize}

%% file: secs/3_relatedwork.tex
\section{Related Work}
\label{sec:related_works}

In this section, we will first discuss three mesh generation approaches: those using intermediate representations, autoregressive models, and diffusion models. 
We will also review recent advances in equivariant optimal-transport flow matching for sets and graphs -- the core technique adapted in this work.

\paragraph{Mesh Generation through Intermediate Representation.}
One common approach is to first generate an intermediate shape representation and apply geometry processing algorithms to transform it into meshes.
For example, previous work~\cite{ShapeGF, pointflow, zeng2022lion} invested in developing a high-quality point cloud generative model. 
In the second stage, they can be turned into meshes using surface reconstruction algorithms such as Poisson Surface Reconstruction~\cite{PSR, peng2021shape}.
Another popular intermediate representation is the voxel. 
Due to its regularity and compatibility with 3D deep neural networks, methods such as~\cite{3dgan,ren2024xcube, lu2024infinicube,brock2016generative,choy20163d,jimenez2016unsupervised,wu2016learning}
first generate voxels and use Marching Cube~\cite{marching_cube} to subsequently turn the voxels into a mesh. 
One drawback of voxel-based generation is its high memory footprint, making high-fidelity 3D generation expensive.
To avoid this issue, previous studies also explored the generation of neural implicit representations~\cite{Park_2019_CVPR, occnet, chen2018implicit_decoder, xiang2024structured,xu2024instantmesh, Liu2023MeshDiffusion, zhang2024clay, zhang20233dshape2vecset}, primitive collections of 2D patches~\cite{groueix2018, yang2025atlas,xu2024brepgen,yan2024object}, BSP-trees~\cite{chen2020bspnet}, or hyper-planes~\cite{cvxnet}.
Although these approaches can circumvent the difficulty of building a generative model of the irregular mesh data, converting their results into meshes remains challenging. 
For example, Marching Cubes can create irregular faces and lose sharp features~\cite{siddiqui2024meshgpt, chen2024meshxl}. 
Furthermore, the tessellation created by the meshing algorithms might be different from and potentially inferior to those created by artists, requiring another round of remeshing for downstream applications. 
In contrast,  MeshFlow outputs meshes directly and is learned from a given mesh dataset.

\paragraph{Auto-regressive Mesh Generation.}
Another popular way to learn both surface and discretization from data is to turn the mesh into a sequence and learn to generate it auto-regressively.
PolyGen~\cite{nash2020polygen} pioneers this idea of learning directly from raw mesh data by introducing two auto-regressive models, one for vertices and the other for edges conditioned on the generated vertices. 
MeshGPT~\cite{siddiqui2024meshgpt} operates on a sequence of latent vectors produced from a graph-based Vector Quantized  VAE~\cite{oord2018neuraldiscreterepresentationlearning,lee2022autoregressive}.
More recent methods~\cite{lionar2025treemeshgpt, chen2024meshxl, chen2024meshanything, tang2024edgerunner,weng2024scaling,chen2024meshanythingv2artistcreatedmesh,wang2025nautilus,hao2024meshtronhighfidelityartistlike3d, weng2024pivotmesh} try to perform auto-regressive generation directly in a single stage, building a network that can output a sequence containing both face and vertex information. 
These methods usually differ in terms of their mesh tokenization scheme and sometimes the network architecture. 
For example, \cite{chen2024meshxl,chen2024meshanything,chen2024meshanythingv2artistcreatedmesh,tang2024edgerunner, wang2025nautilus,lionar2025treemeshgpt, wang2024llamameshunifying3dmesh, wang2025iflame} create more efficient and compact tokenizers, leveraging insights such as the half-edge data structure. 
Meshtron~\cite{hao2024meshtron} proposed \qs{an efficient hourglass transformer} architecture that can process a large number of faces. 
Many approaches have also explored using auto-regressive mesh generative models for downstream tasks such as conditional generation tasks~\cite{gao2024meshart,li2025meshpad, Zhang_2025_ICCV_VertexRegen, fang2025meshllm, lei2025armesh, shen2025flashmeshfasterbetterautoregressive, xu2025meshmosaic} and preference finetuning~\cite{zhao2025deepmeshautoregressiveartistmeshcreation}. 
However, these methods are usually bottlenecked by limitations inherent from auto-regressive generative models, such as their slow inference speed, the difficulty in defining a canonical ordering of the mesh faces, and error accumulation when generating a long sequence. 
Our paper aims to apply flow-matching models to circumvent these limitations, allowing efficient inference.

\paragraph{Diffusion-based Mesh Generation.}
Diffusion models \cite{song2020denoising, sohl2015deep, ho2020denoising} and flow matching \cite{liu2022flow, klein2023equivariant, Esser2024ScalingRF, Lee2024ImprovingTT} both iteratively create and refine all dimensions of the data simultaneously.
They can generate high-dimensional data more efficiently than auto-regressive models and can be used as a plug-and-play prior through different posterior sampling techniques~\cite{chung2023diffusion}.
Several preliminary attempts have been made to apply diffusion-based generation to meshes. 
For example, Polydiffuse \cite{chen2023polydiffuse} uses diffusion models to generate sets of 2D polygonal shapes. SpaceMesh \cite{spacemesh2024} first generates vertices via diffusion and then learns an embedding through a self-supervised loss to recover vertex connectivity. 
\qs{PolyDiff \cite{alliegro2023polydiff} leverages a categorical diffusion model with an architecture similar to UViT~\cite{bao2022all} to produce a quantized triangle soup }.
\qs{A concurrent work, MeshCraft \cite{he2025meshcraft}, applies rectified flow in a latent space ordered by the PolyGen method, built upon a diffusion transformer with RoPE~\cite{su24rope}}.

MeshFlow differs from these works in two ways. 
First, none of them applies diffusion directly to raw mesh data, limiting the generation quality to either the discretization errors~\cite{alliegro2023polydiff} or the quality of the mesh autoencoder~\cite{he2025meshcraft}. 
Second, some methods overlook inherent symmetries in their mesh representations, which limits their efficiency.
In contrast, MeshFlow is a simple yet effective framework for generating triangle meshes in the continuous space by fully leveraging the invariance inherent to meshes represented as triangle soups. 
We take motivation from optimal-transport equivariant diffusion and flow-matching models \cite{pmlr-v108-niu20a, hoogeboom2022equivariantdiffusionmoleculegeneration, jo2022scorebasedgenerativemodelinggraphs,vignac2023digressdiscretedenoisingdiffusion,klein2023equivariant,song2023equivariant, hui2025NotSoOT}, which aim to model a distribution that is invariant to certain symmetries.
However, these methods are mainly focused on the generation of molecules, graphs, and point clouds. Instead, Meshflow is the first method to apply equivariant generation to meshes represented as triangle soup, and we study the important design decisions needed for such non-trivial adaptation.

%% file: secs/4_method.tex

\input{figs/framework}

\section{Preliminaries}

\paragraph{Flow matching.}
Flow Matching (FM)~\cite{liu2022flow, lipman2022flow} produces a Continuous Normalizing Flows~\cite{chen2018neuralode} while avoiding expensive simulation steps typically required in their training.
The core idea is to define a conditional vector field $u_t(\cdot|x_1)$ and a corresponding path $p_t(\cdot|x_1)$ that deterministically transform samples from a prior distribution $q_0$ (e.g., Gaussian) into a Dirac delta distribution centered at a target data point $x_1$ when $t=1$.
Flow matching demonstrates that a neural network $v_{\theta,t}$, which models this conditional vector field, can be trained efficiently using a straightforward Conditional Flow Matching (CFM) objective:
\begin{equation} \label{eq:cfm_from_snippet} 
\mathcal{L}_{\text{CFM}} = \mathbb{E}_{t,q_1(x_1),q_0(x_0)} [\|v_{\theta,t}(x_t) - u_t(x_t|x_1)\|^2].
\end{equation}
A prevalent and simple choice for this conditional setup involves defining the target vector field as $u_t(x|x_1) := x_1 - x_0$, where $x_0 \sim q_0$. This target field corresponds to paths
\begin{equation}
    x_t = (1-t)x_0 + tx_1,
\end{equation}
which are linear interpolations between the noise sample $x_0$ and the data sample $x_1$.
The standard CFM objective in Eq.~(\ref{eq:cfm_from_snippet}) often relies on an \textit{independent coupling} where the noise/data sample pair $(x_0, x_1)$ is sampled from independent distributions $q_0(x_0)$ and $q_1(x_1)$. 
Some previous work~\cite{tong2024improving, pooladian2023multisample} shows that 
an OT map $\pi$ that minimizes $\int C(x_0, x_1)^2 \pi(x_0, x_1)dx_0dx_1$ with the cost function $C(x_0, x_1) = ||x_0 -x_1||$, is a good choice for data coupling since it can lead to straighter trajectories.
However, obtaining such optimal transport maps is often intractable.

\paragraph{Data symmetry and Coupling.}
Fortunately, much data exhibit inherent symmetries given by a certain group $G$.
The probability distribution of such data then becomes invariant w.r.t. actions by $G$. E.g., $P(x) = P(g \cdot x)$ for all data points $x$ and $g\in G$.
Such data symmetry provides a good prior to reduce optimal transport costs.
Previous equivariant OT flow matching works~\cite{song2023equivariant, klein2023equivariant} exploit such prior to generate elements invariant to actions by certain groups, such as permutations, rotations, and translations. 
Specifically, they propose to define the cost function $C(x_0, x_1)$ with one that accounts for these group elements: 
\begin{equation}
    C(x_0, x_1) = \min_{g\in G} || x_0 - g\cdot x_1||^2.
    \label{eq:cost}
\end{equation}
This approach significantly reduces the OT distance even with a small batch size, demonstrating success in modeling structured data such as molecules and point-clouds.
Our work identifies the key symmetries in triangle soups and proposes an efficient approximation to leverage such symmetry to reduce OT flow matching costs.

\paragraph{Equivariant architecture.}
Another important component to ensure an invariant probability distribution is to make use of an equivariant neural network to parameterize $v_\theta$:
prior works have shown that if $v_\theta$ is equivariant to the group $G$, \textit{i.e.}, $v_\theta(g\cdot x, t) = g\cdot v_\theta(x, t)$, and $p_0$ is invariant to $G$, then the probability $p_t$ induced by $v_t$ applied to $p_0$ is also invariant to $G$~\cite{satorras2021n, ballerin2025so, lawrence2025improving}.
\qs{Inspired by early exploration on equivariant architecture network designed for geometric data~\cite{qi2017pointnet, fuchs2020se, satorras2021n}}, our work proposes a simple and effective neural network architecture that is equivariant to the key symmetries in triangle soups.

\section{Method}
We aim to train an unconditional generative model to generate meshes $\mathcal{M}= \{\bm V, \bm F\}$.
Departing from graph-based mesh representations, we adopt a triangle soup representation that shares some symmetry, which will be introduced in Sec.~\ref{subsec:tri}.
Fig.~\ref{fig:framework} shows the framework of our proposed methods: in Sec.~\ref{subsec:train}, we first define a metric between triangle soups, then find the noise-data OT coupling w.r.t. to the metric.
In Sec.~\ref{subsec:eq-arch}, we introduce the equivariant rectified flow network architecture specifically designed for triangle soups.

\subsection{Triangle Soup Representation}
\label{subsec:tri}
A triangle soup is composed of a set of triangle faces $x\in \mathbb{R}^{N\times3\times3} = \{\mathbf f_1, \mathbf f_2, \cdots, \mathbf f_N\}$, where $N$ denotes the number of triangular faces. 
And each face $\mathbf f_i = \{\mathbf v_i^j\}_{j=1}^3$ comprises three unordered vertices. Note that we do not model the orientation of each face, since we do not explicitly model topology.
This set-based representation avoids imposing sequential dependencies, unlike autoregressive approaches to mesh generation.

\paragraph{Data symmetry.}
The triangle soup exhibits two levels of permutation symmetries: 
(1) Face-level: The $N$ triangles comprising the mesh can be arbitrarily permuted without altering the underlying geometry. This corresponds to the symmetric group $S_N$;
(2) Vertex-level: Within each triangle, the order of its three vertices is irrelevant because a triangle soup does not contain connectivity information. Thus, it is invariant to the symmetric group $S_3$.
Together, the two levels of permutation invariance form a subgroup of $S_{3N}$, which we will denote as $G$. Mathematically, $G = S_3 \wr_N S_N$ is the wreath product between $S_3$ and $S_N$. Its group action on a set of $3N$ elements $x = \set{\mathbf f_1, \mathbf f_2, \cdots, \mathbf f_N}$ is given by
\begin{equation*}
    ((\sigma_i)_{i=1}^N, \rho)\cdot x = \set{\mathbf{f}^{\sigma_1}_{\rho(1)}, \mathbf{f}^{\sigma_2}_{\rho(2)}, \cdots, \mathbf{f}^{\sigma_N}_{\rho(N)}},
\end{equation*}
with each $\mathbf{f}^{\sigma_i}_{\rho(i)} = \set{\mathbf{v}^{\sigma(j)}_{\rho(i)}}_{j=1}^3$,
for any $((\sigma_i)_{i=1}^N, \rho)\in G$, $\sigma_i\in S_3$ and $\rho \in S_N$. 

\paragraph{Discussion.}
The triangle soup representation is also employed by PolyDiff~\cite{alliegro2023polydiff} due to its ability to comprehensively encode mesh geometry and its seamless extensibility to quadrangular or general polygon meshes. This representation forms the basis of our work.
However, whereas PolyDiff uses quantized categories for triangle representation, our method directly models the continuous spatial structure and explicitly addresses the aforementioned data symmetries.

\subsection{Equivariant Architecture}
\label{subsec:eq-arch}
Equivariant flow matching requires an equivariant velocity predictor. 
While transformer architecture is permutation equivariant by default, recent work usually finds it necessary to add positional encoding to achieve good performance~\cite{peebles2023scalablediffusionmodelstransformers}. 
However, positional encoding breaks the equivariance of transformers.
In this section, we will introduce a simple modification of the diffusion transformer (DiT)~\cite{peebles2023scalablediffusionmodelstransformers} architecture, maintaining the necessary equivariance to group $G$ as defined in Sec.~\ref{subsec:tri} and good computational efficiency and performance.
The right side of \cref{fig:framework} provides an illustration of our network -- it consists of a vertex positional embedder, a series of equivariant DiT blocks, and an output layer.

\paragraph{Vertex embedder.}
Given a triangle soup $x\in \R^{N\times 3\times 3}$, the vertex embedder encodes each vertex coordinate $p := \bm{v}_i^j$ using sinusoidal positional encoding $\gamma(p)$, similar to NeRF~\cite{mildenhall2020nerf}:
\begin{equation}
    \gamma(p) = (\sin (2^0\pi p), \cos (2^0\pi p), \cdots, \sin (2^{L-1}\pi p), \cos (2^{L-1}\pi p)),
\end{equation}
where $L$ is the number of frequencies.
An MLP then maps these vertex embeddings to a hidden dimension.

\input{figs/eq-dit}

\paragraph{Equivariant DiT block.}
The core of our model is the Equivariant Diffusion Transformer (DiT) block (\Cref{fig:eq-dit})
We would like to design an architecture that is equivariant to $G$ while remaining computationally efficient.
Note that the original DiT block, when applied to all $3n$ vertices, is already equivariant to the larger permutation group $S_{3N}$.
However, the computational complexity in the self-attention layer scales quadratically to the number of tokens, making it computationally expensive to process all vertices in the triangle soup. Furthermore, since $S_{3N}$ contains $G$, being equivariant to $S_{3N}$ will make the network less expressive, as different orderings of \textit{non-equivalent} triangle soups will obtain the same features. 
To tackle the issues mentioned above, we propose to modify the DiT block to be only equivariant to $G$. 

Our key idea is to perform self-attention on faces to aggregate invariant global information for each face vertex to achieve an expressive equivariant DiT block. 
Specifically, we first aggregate the input vertex features $\{\mathbf{v}^0_i, \mathbf{v}^1_i, \mathbf{v}^2_i\}_{i=1}^N$ for each face into face features $\{\mathbf{f}_i\}_{i=1}^N$ using average-pooling, which are processed by a self-attention layer without positional encoding.
Then, the output of the self-attention layer will be duplicated and added to the vertex embeddings to preserve the permutation equivariance of vertices within a triangle face.
Finally, a two-layer feed-forward network (FFN) processes each vertex feature. 
Similar to the original DiT layer, we apply an Adaptive Layer Normalization (adaLN) layer before both the self-attention layer and the FFN layer.
The adaLN layer modulates the features based on both the timestamp $t$ and the number of faces $|F|$.
Although directly applying the DiT block to face features is more efficient 
(i.e., $N^2D+11ND^2$ MACs), it leads to significantly worse performance (See \Cref{sec:ablation}) because vertex-level symmetries are not modeled.

\paragraph{Face conditioning.}
Meshes generated with different face budgets can exhibit different geometric characteristics. 
With a limited face budget, models tend to approximate flat surfaces with larger triangles, whereas generous budgets usually encourage more curved surfaces with smaller triangles. 
Although in principle one could infer the target face count from the number of tokens, the softmax normalization in the attention layer makes it difficult to recover this information from the DiT block without positional encoding~\cite{meta2025llama,kocher2025nope}. 
To address this, we explicitly condition the network on the desired number of faces by embedding this scalar into the Adaptive LayerNorm (adaLN) parameters, ensuring that the model can adapt its generation strategy to any specified face budget.
We create an embedding vector for all $B$ consecutive face numbers, which is added together with the timestep embedding to create the conditioning vector for the adaLN layer in the equivariant DiT block.


\subsection{Symmetry-aware Training Objectives}
\label{subsec:train}
To fully exploit the two-level permutation invariance of triangle soup, we follow prior works on equivariant flow matching~\cite{song2023equivariant, klein2023equivariant} and build the data coupling between noise and data that respect the group $G$.
Building on Eq.~\ref{eq:cost}, we define the cost between noise $x_0$ and triangle soup $x_1$ as the minimal squared distance $\ell_2$ across their orbits under $G$:
\begin{equation}
\label{eq:our_cost} 
C(x_0, x_1) = \min_{((\sigma_i)_{i=1}^N, \rho) \in G}\norm{x_1 - ((\sigma_i)_{i=1}^N, \rho)\cdot x_0}^2.
\end{equation}
In this paper, we restrict ourselves to finding the group action that minimizes $C(x_0, x_1)$, rather than solving for an optimal coupling between different $(x_0,x_1)$. 
Previous work has shown that this approach is computationally efficient and performant~\cite{hui2025NotSoOT}.

\input{figs/illustrate-ot}
\paragraph{Nested coupling.} 
To find the action that minimizes the initial coupling $(x_0, x_1)$ with respect to Eq.~\ref{eq:our_cost}, we first construct the pairwise face cost matrix:
\begin{equation}
    \sigma_{kl}= \operatorname{arg}\min_{\sigma\in S_3}\norm{f_1^l - \sigma \cdot f_0^k}^2
     \text{ and }
     M_{kl}= \norm{f_1^l - \sigma_{kl} \cdot f_0^k}^2,
    \label{eq:cost_face}
\end{equation}
where $f_1^l, f_0^k\in \R^{3\times 3}$ are the $l$-th and $k$-th faces of the clean and noise triangle soup, respectively. 
Once we compute $\mathbf{M}$, Hungarian algorithm~\cite{Kuhn1955} is used to solve the linear assignment problem that yields the optimal face permutation 
$$\rho^* = \arg\min_{\rho\in S_N}\sum_{i=1}^N M_{i, \rho(i)}.$$
This permutation establishes a bijective map $\rho^*$ where $f_0^k$ is matched to $f_1^{\sigma^* (k)}$ such that their L2 distance is minimal modulo permutation of their vertices.
After computing the face-level correspondence $\rho^*\in S_N$ as above, we retrieve vertex-to-vertex correspondences $\sigma^*_i$ for each coupled face pair $(f_0^i, f_1^{\rho^*(i)})$ from Eq.~\ref{eq:cost_face}: $\sigma_i^* = \sigma_{i\phi^*(i)}$.
With this, we obtain that $((\sigma^*_i)_{i=1}^N, \rho^*)\in G$ is the desired coupling between $x_0$ and $x_1$.

Figure~\ref{fig:ot} illustrates the nested coupling strategy for 2D triangles. 
Gray triangles depict noisy faces, while blue triangles represent clean faces (e.g., derived from a Delaunay triangulation). 
Our nested coupling, compared to \textit{naive face coupling} derived from the cost function in Eq.~\ref{eq:cost}, significantly reduces path crossings in the visualized transport plan, suggesting a more stable and coherent generative process. 

Our final objective applies the standard Conditional Flow Matching (CFM) objective while utilizing the OT coupled noise $\tilde{x}_0$ as defined in \cref{eq:our_cost}:
\begin{equation}
    \mathcal{L}(\theta) = \mathbb{E}_{t\sim\mathcal{U}[0,1],x_0\sim q(x_0),x_1\sim q_1(x)}[\norm{v_\theta(x_t, t; c) - (x_1 - \tilde{x}_0)}^2],
\end{equation}
where $x_t = t \cdot x_1 + (1-t) \cdot \tilde{x}_0$ is the linear interpolation between clean triangle soup $x_1$ and the noise after applying the optimal group action to minimize the distance in \cref{eq:cost_face}. Figure~\ref{fig:ot-inf-vis} shows a toy equivariant flow matching model trained on a set of face vertices. Notice that our nested coupling indeed achieves a straighter integration path compared to the independent coupling. This confirms that the nested coupling achieves a substantially lower optimal transport cost.

\begin{wrapfigure}[9]{l}{0.15\textwidth}
    \centering
    \includegraphics[width=0.15\textwidth]{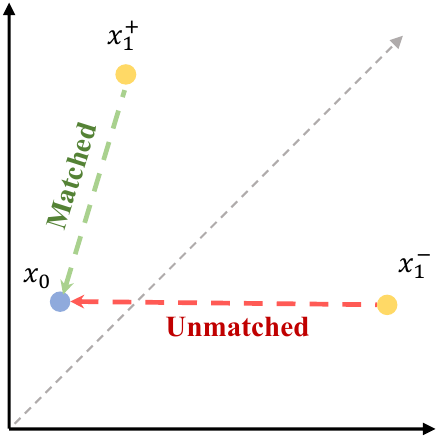}
\end{wrapfigure}
The left inset illustrates how our coupling works in two dimensions. 
Imagine learning a flow-matching model for a permutation-invariant set of two scalars, represented as a 2D point $x_1\in\mathbb{R}^2$.
Under actions given by $S_2$ (swapping the two coordinates), each $x_1^+$ is identified with $x_1^-$.
When we couple $x_0$ to $x_1 = \set{x_1^+, x^-_1}$, our strategy picks the permutation that minimizes the coupling cost and automatically rejects the pairing of $(x_1^-, x_0)$.
In practice, this means that any velocity supervision would only match the data $x_0$ to noise at the same side of the symmetry boundary.
Consequently, the learned velocity field $v_\theta$ consistently predicts directions within the original ordering, rather than averaging over the symmetric configurations.

\begin{figure*}[ht]
    \centering
\includegraphics[width=0.95\linewidth]{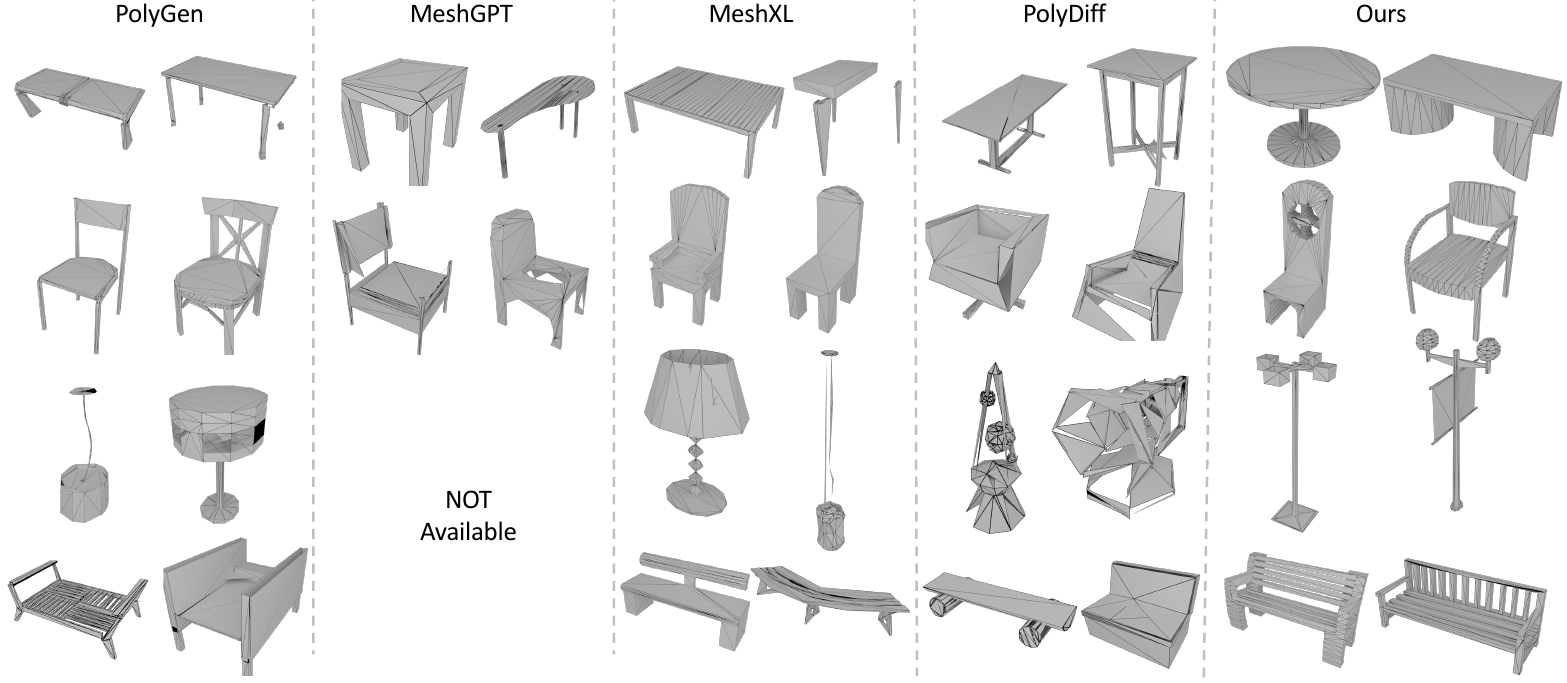}
    \caption{\textbf{Qualitative comparison with the-state-of-the-art methods.} }
    \label{fig:comparison-sota}
\end{figure*}

\begin{figure*}
    \centering
    \includegraphics[width=0.95\linewidth]{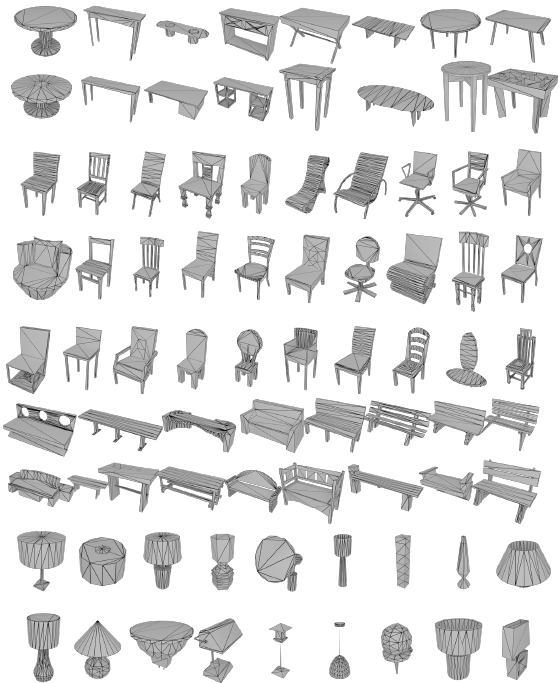}
    \caption{\textbf{Gallery of our generated meshes.}}
    \label{fig:gallery}
\end{figure*}

\subsection{Post-processing}

\begin{wrapfigure}[15]{r}{0.25\textwidth}
    \centering
    \includegraphics[width=0.25\textwidth]{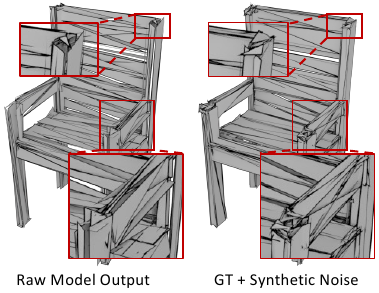}
    \label{fig:denoiser-motivate}
    \caption{Left: generated outputs. Right: closest ground-truth mesh with synthetic Gaussian noise.}
\end{wrapfigure}
To produce a mesh using our model, we follow prior works~\cite{Esser2024ScalingRF} to use the first-order Euler method with $50$ sampling steps.
In contrast to autoregressive methods that predict logits for quantized coordinates, our continuous diffusion framework generates a triangle soup with vertices in a continuous domain. 
This raw output often contains clusters of near-coincident vertices (\cref{fig:denoiser-motivate}).
To produce a good output, we apply a two-step post-processing.

The first post-processing step involves a neural network denoiser. 
We observe that the generated mesh resembles a ground-truth mesh with a small amount of added Gaussian noise (see the right inset).
We hypothesize that if we could train a denoiser to recover the ground-truth mesh from the synthetically noised mesh, then our denoiser could potentially be applicable to also the generated mesh from our model.
Toward this end, we train a mesh denoiser that consumes meshes with a fixed level of Gaussian noise $\eta\epsilon$ added to the ground truth meshes $x$. 
The denoiser is trained to optimize a simple $L_2$ reconstruction loss between the ground truth and the output without any optimal-transport mechanism to permute the face or vertices:
\begin{equation}
    \mathcal L_{\text{denoiser}} = ||f_\theta(x+ \eta \cdot \epsilon) - x||_2^2.
\end{equation}
The denoiser used the same structure as the EquiDiT block without the AdaLN layers.

In the second step, we first apply a thresholding-based clustering algorithm to consolidate these closely located vertices to obtain a set of unique vertices.
All vertices that are within $0.015$ of each other will merge.
We then remove identical faces in the second step.

%% file: figs/framework.tex
\begin{figure*}
    \centering
    \includegraphics[width=0.9\linewidth]{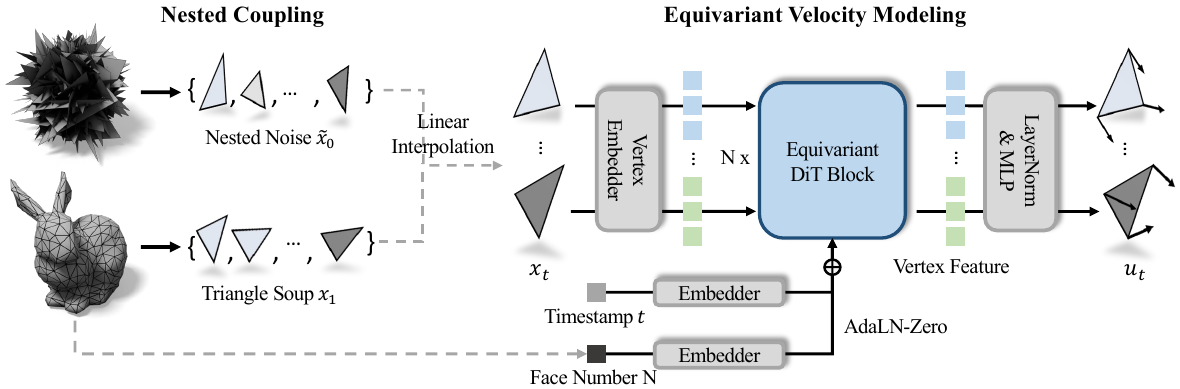}
    \caption{\label{fig:framework}
    \textbf{Framework of MeshFlow.}
    First, we represent the mesh as a triangle soup, which shares two levels of permutation invariance.
    To capture the symmetry inside the triangle soup, we build an optimal transport (OT) map between noise $x_0$ and data $x_1$, obtaining the nested noise $\tilde{x}_0$ (Sec.~\ref{subsec:train}). Given the nested coupling ($\tilde{x}_0, x_1$), flow matching builds path with linear interpolating, defining the constant velocity $u_t$ and sample $x_t$.
    In addition, we design an equivariant architecture (Sec.~\ref{subsec:eq-arch}) for modeling the time-dependent velocity field $v_\theta(x_t, t)$ of the triangle soup.
    }
\end{figure*}

%% file: figs/eq-dit.tex
\begin{figure}
    \centering
    \includegraphics[width=1\linewidth]{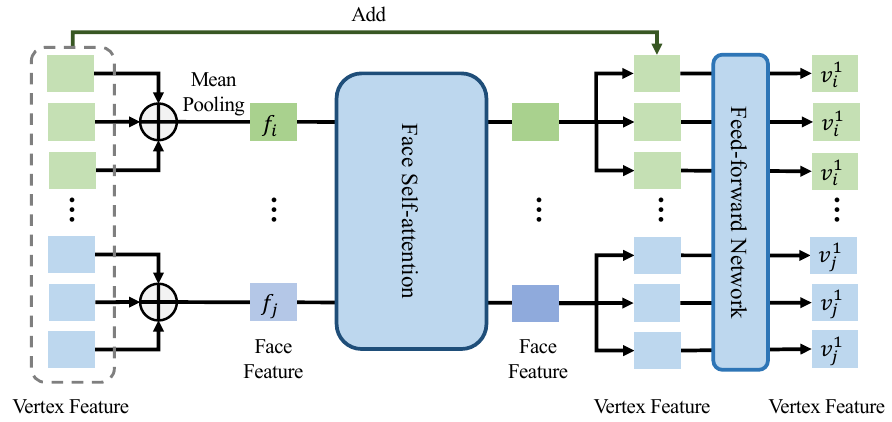}
    \caption{\label{fig:eq-dit}\textbf{Equivariant DiT block}. In consideration of simplicity, we neglect the adaLN block with conditional information (timestamp).
    The DiT block first takes in set of vertex features $\{v_i^1, v_i^2,v_i^3\}_{i=1}^N$. 
    Then the vertex feature $\{v_i^1, v_i^2,v_i^3\}$ in each face is grouped into one face feature $f_i$ by mean pooling.
    Face features $\{f_1, \cdots, f_N\}$ are processed by self-attention.
    Then we add the face feature back to the original vertex feature, ensuring vertex-level equivariance.
    Finally, the vertex features are independently transformed by a feed-forward network to enhance the representation ability.
    }
\end{figure}

%% file: figs/illustrate-ot.tex
\begin{figure}
    \centering
    \includegraphics[width=1\linewidth]{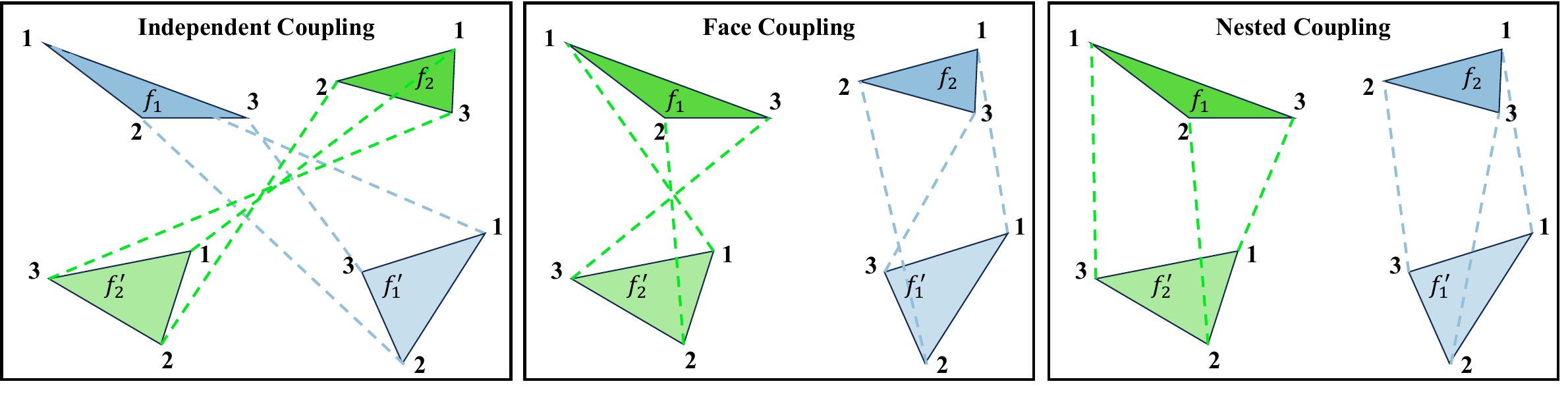}
    \caption{\textbf{2D Coupling Comparison.} Two darker triangles on the top are coupled with two lighter triangles on the bottom using different strategies. Color indicates matched triangles and dotted lines indicate matched vertices. Note that nested coupling results in significantly fewer path intersections compared to face coupling and independent coupling. While face coupling correctly couples the triangles, it still results in vertex crossings. In contrast, our approach results in no crossings and achieves the least cost in Eq.~\ref{eq:our_cost}.}
    \label{fig:ot}
\end{figure}

%% file: secs/5_experiments.tex
\input{tabs/uncond}

\section{Experiments}
\label{sec:Experiments}



\paragraph{Dataset.} 
Following prior works~\cite{chen2024meshxl, siddiqui2024meshgpt}, we evaluate our method on four ShapeNet~\cite{chang2015shapenet} categories: Table, Chair, Lamp, and Bench.
We use the dataset split in MeshXL~\cite{chen2024meshxl}.
Each mesh is normalized to [-0.95, 0.95]$^3$.
To obtain meshes of similar shape but with diverse face counts, we employ quadric edge collapse decimation~\cite{qed}.
The augmented meshes are filtered based on a pre-set maximum Hausdorff distance to maintain fidelity. 
After augmentation, we obtain 126788/80569/14831/13142 meshes for each category, respectively.
We also transform the vertices to have a unit standard deviation and a zero mean. 

\paragraph{Metrics.}
We evaluate MeshFlow from two perspectives: distribution similarity and topological quality.
For distribution assessment, we adopt 1-Nearest Neighbor Accuracy (1-NNA), which measures both fidelity and diversity.
A 1-NNA value approaching 50\% indicates that the generated distribution is indistinguishable from the reference.
Consistent with prior works~\cite{zeng2022lion, pointflow}, we uniformly sample 2,048 points from each mesh and compute 1-NNA using Chamfer Distance (CD) on equal-sized generated and reference sets.
Regarding topological quality, we propose the \textit{Intersected Face Proportion} ($R_{i}$): \( R_{i} = {N_{fi}}/{N} \), where \( N_{fi} \) denotes the number of faces involved in self-intersections.

\paragraph{Baselines.} 
We compare our method with three state-of-the-art autoregressive mesh generative models: MeshXL~\cite{chen2024meshxl}, MeshGPT~\cite{siddiqui2024meshgpt}, and PolyGen~\cite{nash2020polygen}. 
We use their publicly released checkpoints for all experiments. 
We do not report MeshGPT in Lamp/Bench class since model checkpoints are not publicly available. 
Note that MeshXL is pre-trained on a large-scale mesh dataset before finetuning on the ShapeNet categories.
Such large-scale pretraining potentially boosts their performance.
We also compare our own implementation of PolyDiff~\cite{alliegro2023polydiff}, due to the lack of a publicly available model.

\subsection{Unconditional Mesh Generation}
\label{sub:Results II}


\begin{table}[]
    \centering
    \caption{\textbf{Inference Efficiency (in seconds).}
    Our method significantly outperforms autoregressive baselines with an $18\times$ speedup. The proposed post-processing adds minimal latency while ensuring mesh quality.}
    \resizebox{0.8\linewidth}{!}{
    \begin{tabular}{cccccc}
    \toprule
        MeshGPT & MeshXL & PolyDiff & Ours &  + Post-processing\\
        \midrule
        16.271 & 28.931 & 8.528 & \textbf{0.877} & +0.0233 \\
    \bottomrule
    \end{tabular}
    }
    \label{tab:eff}
\end{table}

\Cref{tab:all-in-one} presents the quantitative results for mesh generation.
Our method achieves the best 1-NNA score in 3 out of 4 categories, indicating that our equivariant architecture effectively learns a rich prior for mesh generation.
Compared to PolyDiff~\cite{alliegro2023polydiff}, the state-of-the-art diffusion-based mesh generative model, our approach demonstrates both superior generative performance and fewer face intersections. This highlights the importance of accounting for permutation invariance in meshes when applying diffusion-based generation.
Our method is also more efficient during inference. 
\Cref{tab:eff} shows the average inference speed of different baselines. 
We achieve a speedup of 18.55$\times$ compared to autoregressive methods.
Qualitative results are shown in ~\Cref{fig:comparison-sota}. 
Compared with baselines, we can output meshes with fewer missing and intersecting faces.    
These results suggest that our approach can achieve mesh generation results on par with state-of-the-arts with significantly faster speed along with significant improvements compared to diffusion-based generative work.

\begin{figure*}[ht]
    \centering
    \includegraphics[width=0.9\linewidth]{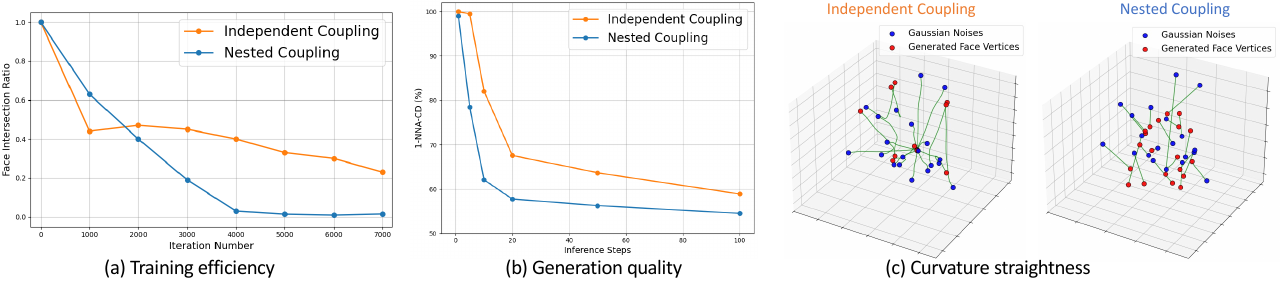}
    \caption{\textbf{Analysis of Nested Optimal Transport.}
    Compared to independent coupling baseline, our nested OT achieves faster training convergence (a); better performance especially in   steps (b); and straighter integral path (c).}
    \label{fig:ot-inf-vis}
\end{figure*}

\begin{figure}[ht]
    \centering
    \includegraphics[width=0.9\linewidth,]{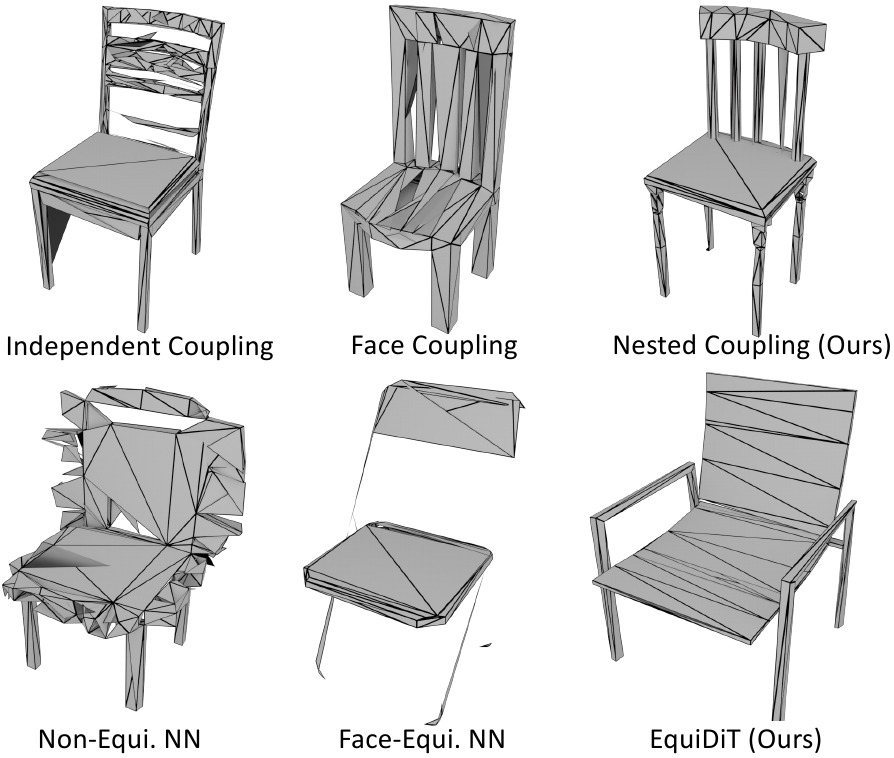}
    \caption{\textbf{Qualitative results for ablative study.} Comparison between different data coupling (top row); comparison between different network architecture (bottom row).
    }
    \label{fig:nest-ot}
\end{figure}

\begin{figure}[ht]
    \centering
    \includegraphics[width=0.9\linewidth]{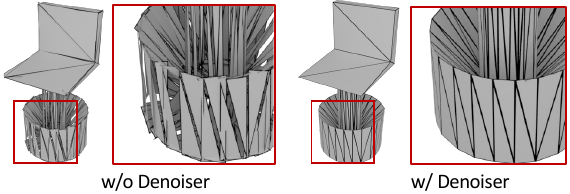}
    \caption{\textbf{Impact of denoiser.} This learnable post-processing effectively removes the low-level noise in the raw model output.}
    \label{fig:denoiser}
\end{figure}

\paragraph{Ablation Study}\label{sec:ablation}
In this section, we validate the effectiveness of two key design choices of our model: the equivariant architecture and the nested OT flow-matching objectives.

\paragraph{Effectiveness of equivariant architecture.}
We validate the effectiveness of the Equivariant DiT Block (\Cref{subsec:eq-arch}) by comparing our method with two baseline variants.
The first, \textit{Non-equi. NN}, uses vanilla DiTs~\cite{peebles2023scalablediffusionmodelstransformers} with positional encodings from the original Transformer~\cite{vaswani2017attention} applied to face features.
The second, \textit{Face-equi. NN}, applies the DiT block to face features obtained via mean pooling over vertex embeddings.
Although this architecture is equivariant to permutations of triangle faces, it lacks equivariance with respect to permutations of vertices within each face.
\qs{
Third, \textit{Vertex-equi. NN}, directly applies the DiT block over vertex embeddings, which ignores the hierarchical grouping of vertices into faces.}
As shown in \Cref{tab:ablation} and \Cref{fig:nest-ot}, our architecture achieves superior 1-NNA performance and produces the most visually coherent results among the three.

\paragraph{Effectiveness of nested optimal transport.}
To validate the effectiveness of our nested optimal transport approach proposed in Sec.~\ref{subsec:train}, we compare it against two alternative coupling methods.
Specifically, we consider: (1) \textit{Independent Coupling (IC)}, which uses the standard flow-matching loss as supervision; and (2) \textit{Face Coupling}, which performs optimal matching only over faces.
As shown in \Cref{fig:nest-ot} and \Cref{tab:ablation}, our method achieves the best 1-NNA score among all coupling variants.
Although the IC variant appears to be on par with our approach in terms of the 1-NNA score (in 50-steps), \Cref{fig:ot-inf-vis}(a) reveals that IC training results in slower convergence in topology quality.
\Cref{fig:ot-inf-vis} (b) evaluates the quality of the generation (1-NNA) in varying inference steps. Our method consistently outperforms the baseline, achieving significantly lower 1-NNA scores. 
Furthermore, \Cref{fig:ot-inf-vis}(c) shows that IC produces more curved flow trajectories, potentially requiring more sampling steps to reach comparable performance.
Therefore, our model obtains high-fidelity results in fewer  (20) function evaluations (\Cref{tab:ablation}), while the IC baseline fails.


\paragraph{Effectiveness of post-processing.}
As reported in \cref{tab:all-in-one}, our post-processing algorithm significantly improves topological quality, reducing the average self-intersection rate ($R_i$) by approximately \textbf{56\%} across all categories. 
Crucially, this refinement preserves the generative fidelity, as evidenced by the comparable 1-NNA.
As illustrated in \Cref{fig:denoiser}, the denoiser effectively rectifies local geometric artifacts. 
It eliminates severe self-intersections observed in the raw output, resulting in cleaner meshes.
Moreover, \Cref{tab:eff} shows that the post-processing algorithm adds a negligible run-time cost.

\input{tabs/ablation}

%% file: tabs/uncond.tex
\begin{table}[tbp]
    \centering
    \caption{
        \textbf{Quantitative Comparisons.} 
        We report 1-NNA and self-intersection rate $R_i$ ($\%$, $\downarrow$).
        The best results are \textbf{bolded} and the second best are \underline{underlined}.
        $^\dagger$From~\cite{chen2024meshxl}. $^\star$Objaverse pre-trained.
    }
    \label{tab:all-in-one}
    \setlength{\tabcolsep}{1.2pt}
    \resizebox{\linewidth}{!}{
        \begin{tabular}{l cc cc cc cc}
            \toprule
            \multirow{2}{*}{Method} & \multicolumn{2}{c}{Chair} & \multicolumn{2}{c}{Table} & \multicolumn{2}{c}{Bench} & \multicolumn{2}{c}{Lamp} \\
            \cmidrule(lr){2-3} \cmidrule(lr){4-5} \cmidrule(lr){6-7} \cmidrule(lr){8-9}
             & 1-NNA & $R_i$ & 1-NNA & $R_i$ & 1-NNA & $R_i$ & 1-NNA & $R_i$ \\
            \midrule
            
            \multicolumn{9}{l}{\textit{Auto-Regressive Models}} \\
            PolyGen$^{\dagger}$ (99.7M)
            & 81.45 & 59.33 & 66.27 & 64.32 & 79.69 & 50.69 & 75.49 & 16.49 \\
            
            MeshGPT (350M)
            & 55.97 & 43.96 & \underline{57.30} & 44.77 & - & - & - & - \\
            
            {MeshXL$^\star$ (1.3B)}
            & \underline{55.32} & \underline{15.20} & 57.78 & \underline{16.53} & \underline{56.25} & 45.56 & \underline{46.77} & 29.14 \\
            
            \midrule
            
            \multicolumn{9}{l}{\textit{Diffusion Models}} \\
            PolyDiff (132M)
            & 79.91 & 91.29 & 73.25 & 69.76 & 61.49 & 40.46 & 70.81 & 83.99 \\
            
            Ours (124M)
            & \textbf{54.51} & 35.42 & 59.14 & 37.50 & \textbf{54.46} & \underline{17.46} & \textbf{51.61} & \underline{15.32} \\
            
            +Post-processing 
            & 56.77 & \textbf{14.02} & \textbf{57.00} & \textbf{15.13} & 59.82 & \textbf{8.56} & 56.45 & \textbf{7.97} \\
            
            \bottomrule
        \end{tabular}
    }
\end{table}

%% file: tabs/ablation.tex


\begin{table}[]
    \centering
        \caption{\textbf{Ablation Study}. 1-NNA scores on the Chair category show that our design (nested coupling and EquiDiT) yields the best performance and remains robust even with only 20 steps thanks to straighter flow trajectory.
        }
        \resizebox{\linewidth}{!}{
    \begin{tabular}{ccccccc}
    \toprule
     \begin{tabular}{@{}c@{}}Inference \\ Steps\end{tabular} & 
     \begin{tabular}{@{}c@{}}Independent \\ Coupling\end{tabular} & \begin{tabular}{@{}c@{}}Face \\ Coupling\end{tabular}& \begin{tabular}{@{}c@{}}Non-equi. \\ NN\end{tabular}  &  \begin{tabular}{@{}c@{}}Face-equi. \\ NN\end{tabular} & \begin{tabular}{@{}c@{}}Vertex-equi. \\ NN\end{tabular} & Ours\\
    \midrule
    50 & 57.42 & 68.87 & 83.87  & 72.42 &87.43 & \textbf{55.97} \\
    20 & 67.58 & 70.71 & 86.58  & 73.03 & 90.20 & \textbf{57.74} \\

    \bottomrule
\end{tabular}}
\label{tab:ablation}
\end{table}

%% file: secs/6_conclusions.tex
\begin{figure}
    \centering
    \includegraphics[width=0.9\linewidth]{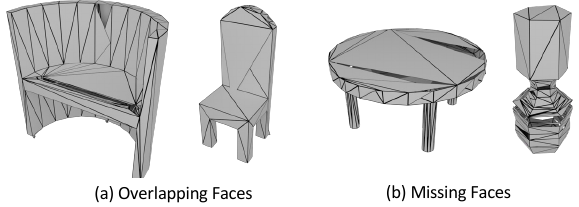}
    \caption{\textbf{Failure cases.} }
    \label{fig:failure}
\end{figure}

\section{Discussion, Limitations, and Future Direction}
\label{sec:Conclusion}

We introduce \textit{MeshFlow}, a novel mesh generative model that leverages equivariant flow matching directly over the triangle-soup representation. 
We identify the key symmetries within triangle soup and design corresponding training objectives as well as a neural network architecture with respect to these symmetries.
Empirically, \textit{MeshFlow} can match performance with state-of-the-art mesh generative models (which are based on autoregressive models) in mesh quality while achieving sub-second inference speed.

\paragraph{Limitation and Future Direction.}
It might seem challenging to scale our coupling algorithm to a large number of faces given its complexity $O(n^3)$.
Toward this end, patch-based training~\cite{jiang2020local} and approximate OT techniques~\cite{bai2023sliced} offer promising avenues to address this limitation.
Our generated meshes occasionally exhibit undesirable artifacts (\Cref{fig:failure}), such as missing or overlapping faces.
We hypothesize that these artifacts stem from limited computational resources and could potentially be solved by large-scale training.
Additionally, extending posterior sampling techniques~\cite{chung2023diffusion} to the equivariant flow matching setting presents an interesting direction for future research.

\begin{acks}
We thank all anonymous reviewers and area chairs for their valuable comments.
The work described in this paper was partially supported by a GRF grant from the Research Grants Council (RGC) of the Hong Kong Special Administrative Region, China [Project No.  CityU 11208123].
This project was also supported in part by NSF-2047677 and 2413161.
The authors acknowledge the support of NVIDIA Academic Hardware Grant Program and the Texas Advanced Computing Center (TACC) at The University of Texas at Austin for providing computational resources that have contributed to the research results reported within this paper.
Kiyohiro Nakayama is supported by National Science Foundation Graduate Research Fellowship program. 
Leonidas Guibas is supported by a Vannevar Bush Faculty Fellowship.
\end{acks}

%% file: secs/8_appendix.tex
\clearpage

\etocdepthtag.toc{mtappendix}
\etocsetnexttocdepth{section}
\etocsettocstyle{}{}

\begin{center}
    {\Huge \textbf{Appendix}}
\end{center}

\localtableofcontents

We have released the full data and code at \url{https://github.com/qiisun/MeshFlow} to contribute to the community of mesh diffusion. 

\section{Proof: Equivariance of the Proposed DiT Block}
\begin{proposition}
The proposed Equivariant DiT block is equivariant to the group $G = S_N \times (S_3)^N$, representing the permutation of faces and the independent permutation of vertices within each face.
\end{proposition}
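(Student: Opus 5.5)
We prove the statement by writing the Equivariant DiT block as a composition of elementary maps and checking that each one commutes with the action of $G$. Equivariance is preserved under composition and under sums of equivariant maps (residual connections), so it is enough to verify each piece separately.

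We represent a layer input as a tensor $h\in\mathbb{R}^{N\times 3\times D}$ with entries $h_i^j$, and let $g=((\sigma_i)_{i=1}^N,\rho)$ act by $(g\cdot h)_i^j = h_{\rho(i)}^{\sigma_i(j)}$, matching the action in Sec.~\ref{subsec:tri}. The block decomposes as follows:
\begin{enumerate}
\item A per-vertex adaLN modulation, $h_i^j\mapsto \mathrm{LN}(h_i^j)\odot(1+\gamma(c))+\beta(c)$.
\item Mean pooling, $f_i=\tfrac13\sum_j h_i^j$.
\item Self-attention over $\{f_i\}$ without positional encoding.
\item Broadcast-and-add, $h_i^j\mapsto h_i^j+a_i$, where $a_i$ is the attention output.
\item A second adaLN.
\item A per-vertex FFN with residual.
\end{enumerate}
The conditioning vector $c$ depends only on $t$ and $N$, both of which are $G$-invariant, so it is unaffected by $g$.

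The pointwise steps (1), (5), (6) are handled together. LayerNorm normalizes over the feature axis $D$ only, and the shared parameters are applied to each vertex independently. Any map of the form $h_i^j\mapsto \phi(h_i^j)$ therefore commutes with any reindexing of the $(i,j)$ slots, and in particular with $g$.

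For mean pooling (2), we compute $\tfrac13\sum_j h_{\rho(i)}^{\sigma_i(j)}=f_{\rho(i)}$, because $\sigma_i$ is a bijection of $\{1,2,3\}$. Pooling thus maps $g\cdot h$ to $\rho\cdot f$: it is invariant to the $(S_3)^N$ part and equivariant to the $S_N$ part.

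For self-attention (3), we invoke the standard fact that attention without positional encoding is $S_N$-equivariant. If $F'=P F$ for a permutation matrix $P$, then $Q'K'^\top=P\,QK^\top P^\top$. Row-wise softmax commutes with this conjugation, so the output is $P\,\mathrm{softmax}(QK^\top/\sqrt d)\,V$, i.e.\ $\rho\cdot a$.

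Broadcasting (4) then gives $h_{\rho(i)}^{\sigma_i(j)}+a_{\rho(i)}=(g\cdot(h+a))_i^j$. This holds because the added term depends only on the face index and is the same for all three vertex slots of that face.

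The main obstacle is notational rather than mathematical. The statement writes $G$ as $S_N\times(S_3)^N$, whereas the group is really the wreath product $S_3\wr S_N$ of Sec.~\ref{subsec:tri}. We therefore need to fix the action convention so that composition is consistent, and check that the per-face $\sigma_i$ travels with the face under $\rho$. We will state the action explicitly at the start and verify each step against that single formula. This sidesteps the issue, since equivariance with respect to the action of each individual element $g$ does not depend on how the group law is written.

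We also note that the vertex embedder and output layer are pointwise, so the same argument extends the result to the whole network.
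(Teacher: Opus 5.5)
Your proposal is correct and follows essentially the same route as the paper's proof. Both decompose the block into mean pooling, position-free self-attention, broadcast-addition and a pointwise FFN, and check that each step commutes with the face-block-preserving reindexing; your convention $h_{\rho(i)}^{\sigma_i(j)}$ and the paper's $\mathbf{x}_{\sigma(i),\pi_{\sigma(i)}(j)}$ parametrize the same set of reindexings. Your version is somewhat more complete, since it explicitly covers the adaLN layers (whose conditioning on $t$ and $N$ is $G$-invariant), the residual connections, and the conjugation argument $Q'K'^\top = P\,QK^\top P^\top$ for attention, all of which the paper leaves implicit.
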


\begin{proof}
Let the input to the block be a tensor $\mathbf{X} \in \mathbb{R}^{N \times 3 \times C}$, representing a set of $N$ faces, where each face consists of 3 vertices with $C$-dimensional features. We denote the feature of the $j$-th vertex in the $i$-th face as $\mathbf{x}_{i,j}$, where $i \in \{1, \dots, N\}$ and $j \in \{1, 2, 3\}$.

We define the group action of $g = (\sigma, \boldsymbol{\pi}) \in G$ on $\mathbf{X}$, where $\sigma \in S_N$ is a permutation of faces and $\boldsymbol{\pi} = \{\pi_i\}_{i=1}^N$ is a set of permutations of vertices within faces, as:
\begin{equation}
    [g \cdot \mathbf{X}]_{i,j} = \mathbf{x}_{\sigma(i), \pi_{\sigma(i)}(j)}
\end{equation}

The Equivariant DiT block $\Phi$ is composed of Mean Pooling ($P$), Self-Attention ($A$), Broadcasting-Addition ($B$), and a point-wise Feed-Forward Network ($F$). We analyze the transformation of features under $g$ at each step.

\textbf{1. Mean Pooling ($P$):} The block computes face features $\mathbf{f}_i = \frac{1}{3} \sum_{k=1}^3 \mathbf{x}_{i,k}$. Since summation is commutative, $\mathbf{f}_i$ is invariant to the vertex permutation $\pi_i$. Under the face permutation $\sigma$, the face features simply permute:
\begin{equation}
    \mathbf{f}'_i = \frac{1}{3} \sum_{k=1}^3 \mathbf{x}_{\sigma(i), \pi_{\sigma(i)}(k)} = \frac{1}{3} \sum_{k=1}^3 \mathbf{x}_{\sigma(i), k} = \mathbf{f}_{\sigma(i)}
\end{equation}

\textbf{2. Self-Attention ($A$):} The self-attention mechanism processes the set $\{\mathbf{f}_i\}$. Since standard self-attention (without positional encoding) is permutation equivariant with respect to the sequence length $N$:
\begin{equation}
    \mathbf{h}'_i = \text{Attention}(\{\mathbf{f}'_j\}_{j=1}^N)_i = \mathbf{h}_{\sigma(i)}
\end{equation}

\textbf{3. Broadcasting and Addition ($B$):} The updated face features are added back to the vertices: $\mathbf{z}_{i,j} = \mathbf{x}_{i,j} + \mathbf{h}_i$. Applying the group action to the input components:
\begin{equation}
    \mathbf{z}'_{i,j} = [g \cdot \mathbf{X}]_{i,j} + \mathbf{h}'_i = \mathbf{x}_{\sigma(i), \pi_{\sigma(i)}(j)} + \mathbf{h}_{\sigma(i)}
\end{equation}
This is equivalent to permuting the output of the addition step by $g$:
\begin{equation}
    [g \cdot \mathbf{Z}]_{i,j} = \mathbf{z}_{\sigma(i), \pi_{\sigma(i)}(j)} = \mathbf{x}_{\sigma(i), \pi_{\sigma(i)}(j)} + \mathbf{h}_{\sigma(i)}
\end{equation}
Thus, $\mathbf{z}' = g \cdot \mathbf{z}$.

\textbf{4. Feed-Forward Network ($F$):} Since $F$ is a point-wise function applied independently to each vertex feature, it commutes with the permutation operators.
\begin{equation}
    \Phi(g \cdot \mathbf{X}) = F(g \cdot \mathbf{Z}) = g \cdot F(\mathbf{Z}) = g \cdot \Phi(\mathbf{X})
\end{equation}

This concludes the proof that the block is equivariant to $G$.
\end{proof}

\qs{
\subsection{Discussion}
According to the proof, standard token-level positional encodings (e.g., RoPE or sequence IDs) adopted by Diffusion Transformers~\cite{peebles2023scalablediffusionmodelstransformers, ma2024sit} break permutation equivariance because they add different biases to tokens based solely on their sequence order. 
Auto-regressive mesh generation, such as iFlame~\cite{wang2025iflame}, relies on standard transformer backbones with positional encodings, which contrasts with our permutation-equivariant design that avoids fixed ordering biases.
}

\section{Extended Results}
\subsection{Implementation Details: Hyper-parameters}
\label{subsec:imple}
We use DiT-B as the backbone, consisting of 12 layer transformers with 12 heads and a hidden dimension of 768, with 124M parameters in total.
The base resolution $N_0$ is set to 32, and the discount factor is set to 0.75.
The frequency of positional encoding is set to $L$=20.
Similar to DiT, we use AdamW~\cite{loshchilov2017decoupled} optimizer with constant learning rate of 2e-4 and 0 weight decay, with batch size set to 256.
We perform exponential moving average (EMA) training with a decay of 0.9999.
We conduct our main experiments on 4$\times$NVIDIA A100 GPU machine for around 3 days, and the code is implemented with PyTorch.
We use flash attention for all Transformer architecture with \verb|bf16| mixed precision to speed up the training process.
We further adopt the following noise shifting strategy in SD3~\cite{Esser2024ScalingRF} to spend more compute on the high noise regions for meshes with more faces.
Specifically, for a mesh with $N$ faces, we apply a noise schedule in the following form:
$t_N(t) =  (\sqrt{N/N_0} \cdot t)/(1 + ({\sqrt{N/N_0} - 1)\cdot t})$.

\qs{
\subsection{Implementation Details: Nested Optimal Transport}
Algorithm~\ref{alg:nested-ot} details the full nested optimal transport procedure. It requires only a standard linear assignment solver (Hungarian algorithm) and an exhaustive enumeration over the six possible vertex permutations per face pair, which makes the method both computationally efficient and straightforward to implement.
For completeness, we provide the classic Kuhn-Munkres algorithm in Algorithm~\ref{alg:hungarian}. In practice, we use the highly optimized \texttt{scipy.optimize} implementation, which follows the same formulation and delivers the required $  O(N^3)  $.
\begin{algorithm}[t]
\caption{\qs{Nested OT Coupling for Triangle Soups}}
\label{alg:nested-ot}
\begin{algorithmic}[1]
\Require Noisy faces \(\{f_k^0\}_{k=1}^N\), clean faces \(\{f_l^1\}_{l=1}^N\) \quad (each \(f \in \mathbb{R}^{3 \times 3}\))
\Ensure Optimal group element \(g^* = ((\sigma_i^*)_{i=1}^N, \rho^*)\)
\Statex
\State \textbf{Step 1: Build face-cost matrix}
\For{\(k = 1\) to \(N\)}
    \For{\(l = 1\) to \(N\)}
        \State \(\sigma_{kl} \gets \arg\min_{\sigma \in S_3} \|f_l^1 - \sigma \cdot f_k^0\|_2^2\)
        \State \(M_{k,l} \gets \|f_l^1 - \sigma_{kl} \cdot f_k^0\|_2^2\)
    \EndFor
\EndFor
\Statex
\State \textbf{Step 2: Face-level assignment}
\State \(\rho^* \gets \textsc{Hungarian}(M)\) \quad // solves \(\rho^* = \arg\min_{\rho \in S_N} \sum_i M_{i,\rho(i)}\)
\Statex
\State \textbf{Step 3: Retrieve vertex permutations}
\For{\(i = 1\) to \(N\)}
    \State \(\sigma_i^* \gets \sigma_{i,\rho^*(i)}\)
\EndFor
\State \Return \((( \sigma_i^* )_{i=1}^N, \rho^*)\)
\end{algorithmic}
\end{algorithm}
}
\begin{algorithm}[t]
\caption{\qs{Hungarian Algorithm (Kuhn-Munkres)}}
\label{alg:hungarian}
\begin{algorithmic}[1]
\Require Cost matrix \(M \in \mathbb{R}^{N \times N}\)
\Ensure Optimal permutation \(\rho^* \in S_N\)
\Statex
\State \textbf{Initialization}
\State \(u_i \gets \min_j M_{i j}\) for \(i = 1 \dots N\) \quad // row duals
\State \(v_j \gets 0\) for \(j = 1 \dots N\) \quad // column duals
\State \(\rho(i) \gets \text{nil}\), \(\sigma(j) \gets \text{nil}\) for all \(i,j\) \quad // matching
\Statex
\While{exists an unmatched row \(i\)}
    \State Find an augmenting path \(P\) from unmatched row \(i\) in the equality subgraph
          (edges where \(M_{ij} = u_i + v_j\)) using BFS on reduced costs
    \State Let \(\delta \gets\) minimum slack along any candidate edge in the search tree
    \State Update duals:
          \[
          u_i \gets u_i + \delta \quad \text{for all rows in } P, \qquad
          v_j \gets v_j - \delta \quad \text{for all columns in } P
          \]
    \State Augment the current matching along path \(P\)
\EndWhile
\State \(\rho^* \gets\) final matching
\State \Return \(\rho^*\)
\end{algorithmic}
\end{algorithm}

\subsection{Simple Post-processing Details}
Since the raw output of our model is a collection of independent triangles (triangle soup) with minor floating-point inconsistencies, we require a welding operation to recover the underlying topological connectivity. Algorithm \ref{alg:blender_style_merge} outlines our efficient vertex welding strategy based on spatial partitioning. First, we construct a $k$-d tree on all input vertices to accelerate spatial queries. We then iterate through the vertices; for each unvisited vertex $\mathbf{v}_i$, we perform a radius search to identify all neighbors within a distance threshold $\epsilon$ (e.g., $10^{-2}$). These neighbors are spatially clustered and mapped to a single canonical index, effectively "collapsing" them into one vertex. Finally, the mesh faces are reconstructed using these new indices. During this process, we explicitly filter out degenerate faces—triangles where two or more vertices have collapsed into the same index—to ensure the geometric validity of the final mesh.
In practice, we use build-in functions in Blender to achieve this.

\begin{algorithm}[h]
\caption{Vertex Merging}
\label{alg:blender_style_merge}
\begin{algorithmic}[1]
\Require 
    Triangle soup $\mathcal{T}_{in}$ with vertices $\mathcal{V}_{in}$;
    Distance threshold $\epsilon$ (e.g., $10^{-2}$).
\Ensure 
    Manifold mesh $(\mathcal{V}_{out}, \mathcal{F}_{out})$.

\State \textbf{Build} a $k$-d tree $\mathcal{K}$ on all vertices $\mathcal{V}_{in}$.
\State Initialize index map $M: \{1 \dots N\} \to \{1 \dots N\}$, initially $M[i] = -1$.
\State Initialize $\mathcal{V}_{out} \leftarrow \emptyset$.
\State $c \leftarrow 0$ \Comment{Counter for welded vertices}

\For{$i = 1$ to $|\mathcal{V}_{in}|$}
    \If{$M[i] \neq -1$} \textbf{continue} \EndIf \Comment{Already merged}
    
    \State \textbf{Query}: Find set of neighbors $\mathcal{N} \leftarrow \mathcal{K}.\text{radius\_search}(\mathbf{v}_i, \epsilon)$.
    
    \State Append $\mathbf{v}_i$ to $\mathcal{V}_{out}$.
    \State $M[i] \leftarrow c$.
    
    \For{each neighbor $j \in \mathcal{N}$}
        \State $M[j] \leftarrow c$ \Comment{Collapse neighbors to current vertex}
    \EndFor
    \State $c \leftarrow c + 1$
\EndFor

\State \textbf{Reconstruct Faces}:
\For{each face $(a, b, c) \in \mathcal{T}_{in}$}
    \State $f' \leftarrow (M[a], M[b], M[c])$
    \If{$M[a], M[b], M[c]$ are distinct}
        \State Append $f'$ to $\mathcal{F}_{out}$
    \EndIf
\EndFor

\State \Return $(\mathcal{V}_{out}, \mathcal{F}_{out})$
\end{algorithmic}
\end{algorithm}

\subsection{Denoiser Implementation Details}
The denoiser is trained using the exact same dataset split as the primary diffusion backbone. 
To synthesize the noisy inputs encountered during inference, we apply data augmentation by injecting Gaussian noise with a standard deviation of $\eta=0.02$ to the ground-truth mesh vertices. This effectively simulates the positional inaccuracies and discretization errors inherent in the flow matching integration path.
We train the model using a batch size of 128 and an initial learning rate of $1 \times 10^{-4}$ with a cosine decay schedule. 
The training duration is determined by monitoring the validation Mean Absolute Error (MAE). Specifically, we find that the model reaches minimum validation loss at approximately {9k} iterations for Bench, {21k} for Lamp, {30k} for Chair, and {63k} for Table categories.

\begin{figure}
    \centering
    \includegraphics[width=0.9\linewidth]{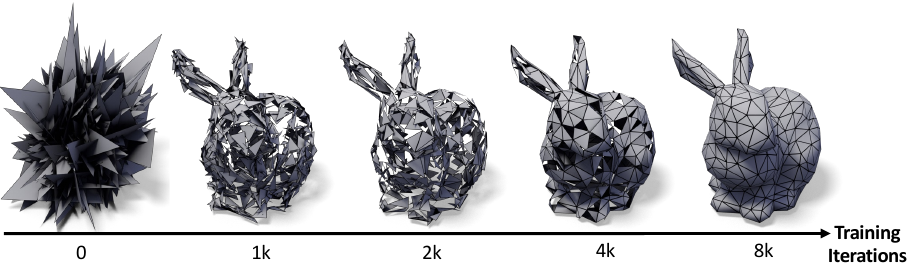}
    \caption{
    \textbf{Topology as an emergent property.} 
    Generated mesh in evolving training iterations.
    }
    \label{fig:bunny}
\end{figure}

\begin{figure}[t]
    \centering
    \includegraphics[width=\linewidth]{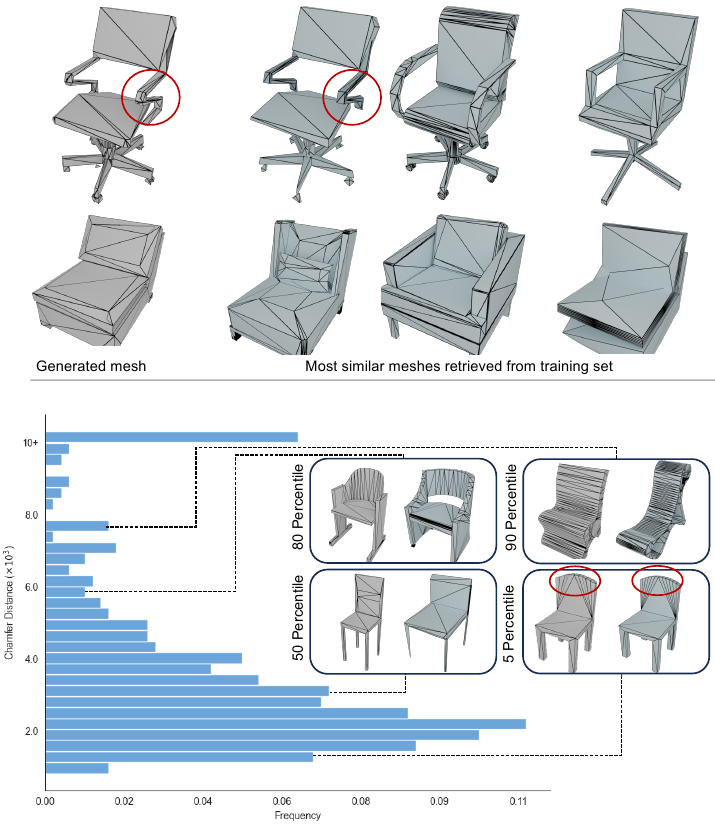}
    \caption{Shape novelty analysis on ShapeNet~\cite{chang2015shapenet} chair category. 
    We show the 3 nearest neighbors in terms of Chamfer Distance (CD) for a generated shape (top). We also plot the distribution of 500 generated chair samples from our method and their closeness to training distribution. Our method can generate shapes that are similar (low CD) as well as different (high CD) from the training distribution, with shapes at the 50th percentile looking different from the closest train shape.}
    \label{fig:diversity}
\end{figure}

\subsection{Convergence of Topology}
Although our triangle soup representation does not explicitly encode mesh connectivity, we observe that coherent topology can emerge as training progresses.
We conduct a MeshFlow training with single bunny data.
Figure~\ref{fig:bunny} illustrates the evolution of a single mesh's geometry and topology quality during training.
As training progresses, we can see that vertex ratio, face intersection ratio, and chamfer distance decrease.
The geometry quality, measured by the Chamfer Distance, improves more significantly in the earlier stage, while the topological quality emerges in the later stage (after 8,000 training steps).

\subsection{Shape Novelty Analysis}
\subsubsection{Long-tailed Distribution}
To demonstrate that our model possesses true generative capabilities rather than merely memorizing the training set, we conduct a shape novelty analysis following previous works~\cite{weng2024pivotmesh, siddiqui2024meshgpt}. 
Specifically, we generate 500 random samples and identify their nearest neighbors from the training set based on the minimum Chamfer Distance (CD). 
The quantitative results are visualized in Figure~\ref{fig:diversity}.

The distribution of minimum CDs exhibits a long-tailed characteristic, indicating a healthy balance between distribution coverage and novelty. 
The lower end of the CD spectrum (e.g., 5th percentile) confirms that our method can generate shapes that faithfully align with the training distribution. 
Crucially, a significant portion of samples falls into the higher CD range (e.g., 50th to 90th percentiles), suggesting that the model is capable of synthesizing highly novel geometries that differ substantially from any training instance.
Furthermore, as shown in the top row of Figure~\ref{fig:diversity}, even for generated shapes with relatively low CD, the retrieved nearest neighbors exhibit distinct structural differences (highlighted in red), proving that our model creates unique variations rather than simply retrieving database copies.

\subsubsection{Similar shapes with different topology}
Figure~\ref{fig:difftopo} demonstrates the stochastic nature of our generative process. 
We display a sequence of generated samples that share a nearly identical visual appearance and geometric hull. 
However, a closer inspection of the wireframes reveals distinct topological structures in each instance. 
This "one-geometry, many-topologies" capability indicates that our model can explore different triangulation strategies for a fixed shape, providing flexibility for downstream applications that may require specific mesh qualities.

\begin{figure}
    \centering
    \includegraphics[width=\linewidth]{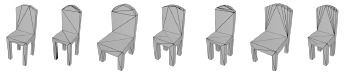}
    \caption{Similar shape with different mesh discretization. }
    \label{fig:difftopo}
\end{figure}

\subsection{Comprehensive Quantitative Evaluation}
\label{subsec:comprehensivequanti}
\subsubsection{Efficiency comparison details.}

Tab.2 is evaluated using the chair category. For each method, we generate 1k meshes and calculate the average inference time on an NVIDIA A6000 GPU with batch-size 1.  We generated meshes with face length randomly sampled from the training set. The autoregressive baselines are run until they generate the EOS token. Complexity-wise, our meshes have on average 452.9 faces while MeshXL’s have 403.5 faces. Our method is both fast and produces meshes with more faces.

\subsubsection{Evaluation details}
In addition to the metrics reported in the main text, we adopt Minimum Matching Distance (MMD) and Coverage (COV) to further analyze the fidelity and diversity of the generated shapes. 
Let $\mathcal{S}_g$ and $\mathcal{S}_r$ denote the set of generated shapes and the reference (test) set, respectively. We employ Chamfer Distance (CD) as the distance measure $D(X, Y)$ between two shapes $X$ and $Y$. The detailed definitions are as follows:
MMD measures the fidelity of the generated samples by calculating the average distance from each shape in the reference set to its nearest neighbor in the generated set. Lower MMD indicates better quality.
\begin{equation}
    \text{MMD}(\mathcal{S}_g, \mathcal{S}_r) = \frac{1}{|\mathcal{S}_r|} \sum_{Y \in \mathcal{S}_r} \min_{X \in \mathcal{S}_g} D(X, Y).
\end{equation}
Coverage measures the diversity of the generated shapes by counting the fraction of reference shapes that are matched to at least one generated shape. Higher COV indicates better coverage of the data distribution.
\begin{equation}
    \text{COV}(\mathcal{S}_g, \mathcal{S}_r) = \frac{|\{\text{argmin}_{Y \in \mathcal{S}_r} D(X, Y) \mid X \in \mathcal{S}_g\}|}{|\mathcal{S}_r|}.
\end{equation}
1-NNA is a classifier-based metric that assesses whether the generated distribution and the reference distribution are distinguishable. The ideal score is $50\%$, indicating that the two distributions are indistinguishable.
\begin{equation}
    \text{1-NNA}(\mathcal{S}_g, \mathcal{S}_r) = \frac{\sum_{X \in \mathcal{S}_g} \mathbb{I}_{X} + \sum_{Y \in \mathcal{S}_r} \mathbb{I}_{Y}}{|\mathcal{S}_g| + |\mathcal{S}_r|},
\end{equation}
where $\mathbb{I}_{X} = \mathbb{I}[N_X \in \mathcal{S}_g]$ and $\mathbb{I}_{Y} = \mathbb{I}[N_Y \in \mathcal{S}_r]$ are indicator functions. Here, $N_X$ is the nearest neighbor of $X$ in the union set $\mathcal{S}_r \cup \mathcal{S}_g \setminus \{X\}$.
Beyond geometric measures, we also evaluate the \textit{perceptual quality} of the mesh surfaces. Since point cloud metrics (e.g., CD-based 1-NNA) may not fully capture surface visual artifacts, we render the generated meshes into shaded images from fixed viewpoints. We then compute Fréchet Inception Distance (FID) and Kernel Inception Distance (KID) on these 2D renderings to quantify the visual similarity between the generated and reference distributions.
To ensure a fair and consistent comparison, we conduct all evaluations using the identical test split of the dataset for both our method and the baselines.
Specifically, for geometric metrics (MMD, COV, 1-NNA), we sample 2,048 points uniformly from the surface of each mesh to compute the Chamfer Distance. Unlike previous works that might use varying splits or sample sizes, we enforce a strict one-to-one correspondence with the official test set to guarantee the validity of our reported results.

\subsubsection{Full quantitative comparison}
In Table~\ref{tab:full-comparison}, our method demonstrates superior generation fidelity, achieving the lowest MMD scores across most categories (e.g., {14.85} on Chair and {25.20} on Lamp), significantly outperforming the large-scale auto-regressive baseline MeshXL (350M) despite using only {124M} parameters. 
In terms of diversity, our method maintains competitive COV scores (e.g., {61.29} on Lamp), indicating that our flow matching framework effectively covers the modes of the data distribution without collapsing.
In Table~\ref{tab:fid-kid-comparison}, consistent with the geometric analysis, our model achieves state-of-the-art perceptual quality on the Chair and Lamp categories (FID {46.57} and {83.07}, respectively), further validating that our generated meshes possess both high-quality geometry and realistic visual appearance.

\begin{table}[h]
    \centering
    \caption{
        \textbf{Perceptual Quality Comparisons on Shaded Images.} 
        We report Fréchet Inception Distance (FID, $\downarrow$) and Kernel Inception Distance (KID, $\times 10^{3}$, $\downarrow$).
        The best results are \textbf{bolded}.
        ``-'' indicates the model does not support the category or results are unavailable.
    }
    \label{tab:fid-kid-comparison}
    \setlength{\tabcolsep}{1.5pt}
        \begin{tabular}{l cc cc cc cc}
            \toprule
            \multirow{2}{*}{Method} & \multicolumn{2}{c}{Chair} & \multicolumn{2}{c}{Table} & \multicolumn{2}{c}{Bench} & \multicolumn{2}{c}{Lamp} \\
            \cmidrule(lr){2-3} \cmidrule(lr){4-5} \cmidrule(lr){6-7} \cmidrule(lr){8-9}
             & FID & KID & FID & KID & FID & KID & FID & KID \\
            \midrule
            
            MeshGPT
            & 76.58 & 64.70 & 84.10 & 53.11 & 80.93 & 20.22 & 174.69 & 39.01 \\
            
            MeshXL
            & 59.38 & 43.37 & \textbf{46.93} & 39.82 & \textbf{54.61} & \textbf{14.18} & 123.58 & 27.61 \\
            
            \midrule
            
            \textbf{Ours}
            & \textbf{46.57} & \textbf{41.05} & 48.71 & \textbf{36.64} & 60.10 & 16.16 & \textbf{83.07} & \textbf{12.25} \\
            
            \bottomrule
        \end{tabular}
\end{table}

\begin{table}[h]
    \centering
    \caption{Ablation study on the effectiveness of the TimeShift}
    \label{tab:ablation_timeshift}
    \begin{tabular}{lcccc}
        \toprule
        \textbf{Method} & \textbf{COV} (\%, $\uparrow$) & \textbf{MMD} ($\downarrow$) & \textbf{1-NNA} (\%) & \textbf{JSD} ($\downarrow$) \\
        \midrule
        w/o TimeShift & 49.35 & 16.50 & 55.81 & \textbf{16.50} \\
        w/ TimeShift & \textbf{49.93}  &  \textbf{14.85} & \textbf{54.51} & 16.51 \\
        \bottomrule
    \end{tabular}
\end{table}

\subsection{Number of faces control}
\begin{figure}
    \centering
    \includegraphics[width=\linewidth]{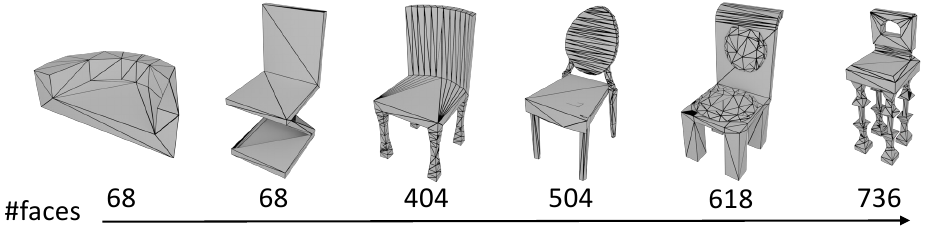}
    \caption{\textbf{Visual comparison of meshes under different face budgets.} Consistent with our quantitative analysis, a \textbf{high face budget} (e.g., 736) yields shapes with fine geometric details and higher curvature. Conversely, a \textbf{low face budget} (e.g., 68) results in a stylistic ``low-poly'' abstraction by smoothing out high-frequency details and producing larger planar regions.}
    \label{fig:numberfaces}
\end{figure}

\begin{table*}[ht]
    \centering
    \caption{
        \textbf{Quantitative Comparisons with Prior Arts on ShapeNet~\cite{chang2015shapenet}. }
        We scale MMD, JSD by $10^3$.
        Our method can produce diverse and high-quality 3D meshes. $\phantom{}^{\dagger}$Metrics for PolyGen are copied from~\cite{chen2024meshxl}. $\phantom{}^\star$Methods are pre-trained on the large-scale Objaverse~\cite{objaverseXL} before being fine-tuned on the specific categories.
    }
    \label{tab:full-comparison}
    \resizebox{\linewidth}{!}{
    \begin{tabular}{clccccclccccc}
    \toprule
       \multirow{3}{*}{Type}    & \multirow{3}{*}{Method (\#Params)} & \multicolumn{5}{c}{Chair} & \multirow{3}{*}{Method (\#Params)} & \multicolumn{5}{c}{Table}\\
    \cmidrule{3-7}\cmidrule{9-13}
      &  &  COV$\uparrow$ & MMD$\downarrow$ & 1-NNA & JSD$\downarrow$ & $R_i (\%) \downarrow$ & &COV$\uparrow$ & MMD$\downarrow$ & 1-NNA & JSD$\downarrow$ & $R_i (\%) \downarrow$ \\ 
      \midrule
    
    \multirow{3}{*}{AR} 
    & PolyGen$^{\dagger}$ (99.7M) & 29.47 & {16.34}& 81.45 & 228.80 & 59.33 & PolyGen$^{\dagger}$ (99.7M)  & 38.67 &  15.84 & 66.27 & 25.06& 64.32\\
    & MeshGPT (350M) & 51.29 & 18.52 & 55.97 & 12.78 & 43.96 & MeshGPT (350M) & 50.77 & 18.25 & 57.30 & \textbf{7.85}  & 44.77 \\
    & MeshXL$^\star$ (350M) & \textbf{52.22} & {17.50} & {55.32} & \textbf{12.29} & {15.20} & MeshXL$^\star$ (350M) & \textbf{52.91} & 16.56 & 57.78 & 9.16 & {16.53}\\
    \midrule
    \multirow{4}{*}{Diffusion} 
    & PolyDiff (132M) & 19.35 & 23.46 & 79.91 & 47.51 & 91.29 & PolyDiff (132M) & 40.46 & 22.21  &  73.25 &  36.34  & 69.76\\
    & Ours (124M) & 49.93 & \textbf{14.85} & \textbf{54.51} & 16.51 & 35.42   & Ours (124M) & 45.13 & \textbf{14.92} & 59.14 & 14.10 & 37.50\\
    & + Post-processing & 50.32 & 16.70 & 56.77 & 13.46  &  \textbf{14.02} & + Post-processing & 46.11 & 15.20 & \textbf{57.00}  & 16.20 & \textbf{15.13} \\

    \midrule
    \multirow{3}{*}{Type} & \multirow{3}{*}{Method (\#Params)} & \multicolumn{5}{c}{Bench} & \multirow{3}{*}{Method (\#Params)} & \multicolumn{5}{c}{Lamp}\\
    \cmidrule{3-7}\cmidrule{9-13}
    &  &  COV$\uparrow$ & MMD$\downarrow$ & 1-NNA & JSD$\downarrow$ & $R_i (\%) \downarrow$& &COV$\uparrow$ & MMD$\downarrow$ & 1-NNA & JSD$\downarrow$ & $R_i (\%) \downarrow$\\ 
    \midrule
     \multirow{2}{*}{AR} 
    & PolyGen$^{\dagger}$ (99.7M) & 37.50 & \textbf{10.8} &   79.69 & 55.25& 50.69 & PolyGen$^{\dagger}$ (99.7M) & 31.76 & 33.87 &  75.49& 81.76 &16.49 \\
    & MeshXL$^\star$ (350M) & \textbf{55.35} & 13.91 & 56.25 & \textbf{26.70} & 45.56 & MeshXL$^\star$ (350M) & 54.83 & 31.33 & {46.77} & 51.59 & 29.14\\
    \midrule
    \multirow{4}{*}{Diffusion} 
    & PolyDiff (132M) & 43.68  &31.31 & 61.49 & 32.60 & 40.46 & PolyDiff (132M) & 48.38 & 38.03 & 70.81 & 69.43 &  83.99\\
    & Ours (124M) &  51.79 & 14.33  & \textbf{54.46} & 32.21 & 17.46 & Ours (124M) & \textbf{61.29} & \textbf{25.20} & \textbf{51.61} & \textbf{38.18} & 15.32  \\
    & + Post-processing & 46.43  & 15.70 & 59.82 & 46.9 & \textbf{8.56} & + Post-processing & 50.32 & 26.90 & 56.45  & 48.54 &  \textbf{7.97} \\
    \bottomrule
        \end{tabular}
}
\end{table*}
In this section, we provide further quantitative evidence supporting the correlation between face counts and geometric details. Table~\ref{tab:face_budget_appendix} reports the average face area and total Gaussian curvature of meshes generated under different face budgets. 

As indicated by the results, shapes generated with a \textbf{high face budget} (e.g., 800) exhibit significantly smaller face areas ($0.011$) and higher Gaussian curvature ($840.03$), demonstrating the preservation of fine geometric details. Conversely, a \textbf{low face budget} (e.g., 100) results in larger face areas ($0.058$) and reduced curvature ($119.98$). This confirms that restricting the face count effectively produces a stylistic ``low-poly'' abstraction by smoothing out high-frequency details.
\begin{table}[ht]
    \centering
    \caption{Quantitative comparison of geometric properties under different face budgets. A higher face budget results in higher curvature (more details), whereas a lower budget leads to larger face areas (low-poly style).}
    \label{tab:face_budget_appendix}
    \begin{tabular}{lcc}
        \toprule
        \textbf{Face Budget} & \textbf{Avg. Face Area} & \textbf{Gaussian Curvature}\\
        \midrule
        Low (100)  & 0.058 & 119.98 \\
        High (800) & 0.011 & 840.03 \\
        \bottomrule
    \end{tabular}
\end{table}

\subsection{More mesh generation results}
\subsubsection{More qualitative comparisons}
We show more comparison results in~\Cref{fig:more-comp} (extension of \Cref{fig:comparison-sota}).
Compared to the baselines, our generation framework produces various and high-quality meshes.
We \textbf{do not} compare the results with PivotMesh~\cite{weng2024pivotmesh} since it does not release the checkpoint in ShapeNet categories.
And MeshGPT does not release the checkpoint in lamp and bench categories.

\qs{
\subsubsection{Evaluation on additional ShapeNet categories}
Following prior baselines (MeshGPT and MeshXL), our primary quantitative evaluation uses four ShapeNet categories. To verify generalization to a broader range of 3D shapes, we additionally trained both our method and MeshXL on six categories selected from the top-10 largest ShapeNet classes.
Table~\ref{tab:extra-shapenet} reports 1-NNA ($\downarrow$) on these additional categories. Our method consistently outperforms MeshXL across all six classes, confirming robust generation quality beyond the original limited set.
\begin{table}[t]
\centering
\caption{\qs{1-NNA ($\downarrow$) on six additional ShapeNet categories (lower is better).}}
\label{tab:extra-shapenet}
\resizebox{\linewidth}{!}{
\begin{tabular}{lcccccc}
\toprule
Method & Airplane & Cabinet & Display & Bathtub & Bottle & Loudspeaker \\
\midrule
MeshXL & 65.38 & 65.63 & 58.62 & 52.63 & 62.50 & 75.60 \\
Ours   & \textbf{53.84} & \textbf{48.48} & \textbf{55.08} & \textbf{51.31} & \textbf{52.50} & \textbf{59.48} \\
\bottomrule
\end{tabular}
}
\end{table}
}

\qs{
\subsection{OT complexity and scalability}
The optimal transport (OT) coupling incurs $  O(N^3)  $ complexity during training only, as it constructs supervision trajectories per batch. At inference, generation involves solely EquiDiT forward passes and ODE solving, yielding approximately $  O(N^2)  $ scaling dominated by transformer and integration steps.
Table~\ref{tab:ot-timing} reports per-batch timings (NVIDIA A100 GPU) comparing OT coupling to diffusion forward+backward passes:
\begin{table}[t]
\centering
\caption{\qs{Per-batch training times (ms) on NVIDIA A100 GPU: OT coupling vs. diffusion forward+backward passes.}}
\label{tab:ot-timing}
\begin{tabular}{lcccc}
\toprule
 & \multicolumn{4}{c}{Number of Faces ($N$)} \\
\cmidrule(lr){2-5}
 & 400 & 800 & 1600 & 3200 \\
\midrule
OT Coupling & 17.02 & 81.94 & 374.82 & 2045.18 \\
Diffusion Fwd + Bwd & 210.64 & 395.73 & 802.33 & 1738.15 \\
\bottomrule
\end{tabular}
\end{table}
Even at $  N=1600  $ (double our default), OT remains faster than network passes and constitutes a small fraction of iteration time. While the current Hungarian-based implementation supports meshes up to several thousand faces, scaling to dense production meshes (10k+ faces) motivates future work on efficient OT solvers~\cite{xia2026a, cui2025optical, cui2025multi} for near-linear complexity.
}

\section{Extended Ablation study}

\subsection{More results of denoiser}
We provide additional visual evidence demonstrating the efficacy of our Denoising Mesh Decoder in Figure~\ref{fig:denoiser2}. While the flow matching model successfully captures the global semantic structure, the raw outputs often suffer from high-frequency artifacts due to the unstructured nature of the triangle soup representation.
As highlighted in the zoomed-in regions of Figure~\ref{fig:denoiser2}, these artifacts manifest as severe self-intersections (top row), irregular surface noise on planar regions (middle row), and disconnected or jagged geometry in thin structures like armrests (bottom row).
The denoiser effectively acts as a geometric projection operator, mapping these noisy, non-manifold inputs to clean, high-quality meshes. It resolves face intersections and regularizes the triangulation patterns, resulting in smooth surfaces and sharp edges without altering the underlying semantic identity of the object.

\begin{figure}
    \centering
    \includegraphics[width=\linewidth]{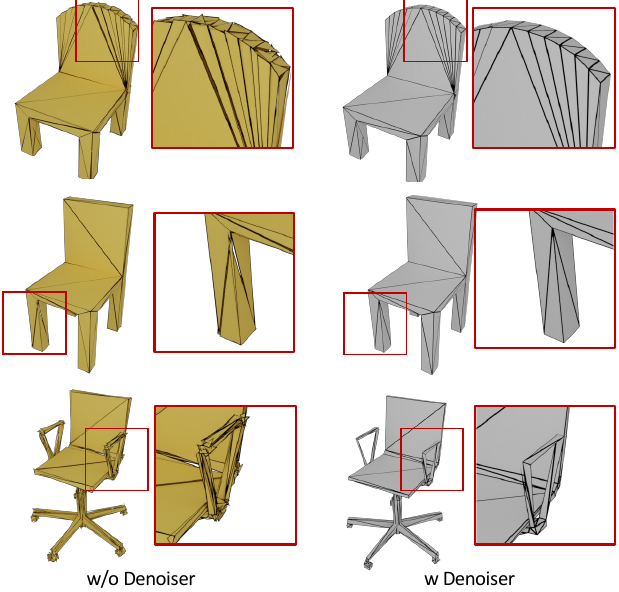}
    \caption{Impact of denoiser. (More cases)}
    \label{fig:denoiser2}
\end{figure}

\subsection{The impact of time shifting}
We conduct an ablation study to investigate the effectiveness of the time-shifting in our flow matching inference. The quantitative results are reported in Table~\ref{tab:ablation_timeshift}.
Comparing the model trained with and without time shifting, we observe a clear performance improvement when the strategy is applied. 
Specifically, the Minimum Matching Distance (MMD) decreases significantly from 16.50 to \textbf{14.85}, indicating that time shifting helps the model generate shapes with higher fidelity. 
Furthermore, the 1-NNA metric improves from 55.81\% to \textbf{54.51\%}, suggesting that the generated distribution aligns more closely with the ground truth. 
The Coverage (COV) also sees a slight increase to \textbf{49.93\%}, while the JSD remains comparable. 
These results confirm that time shifting is a crucial component for stabilizing training and enhancing the overall quality of the generated meshes.
\begin{figure}
    \centering
    \includegraphics[width=\linewidth]{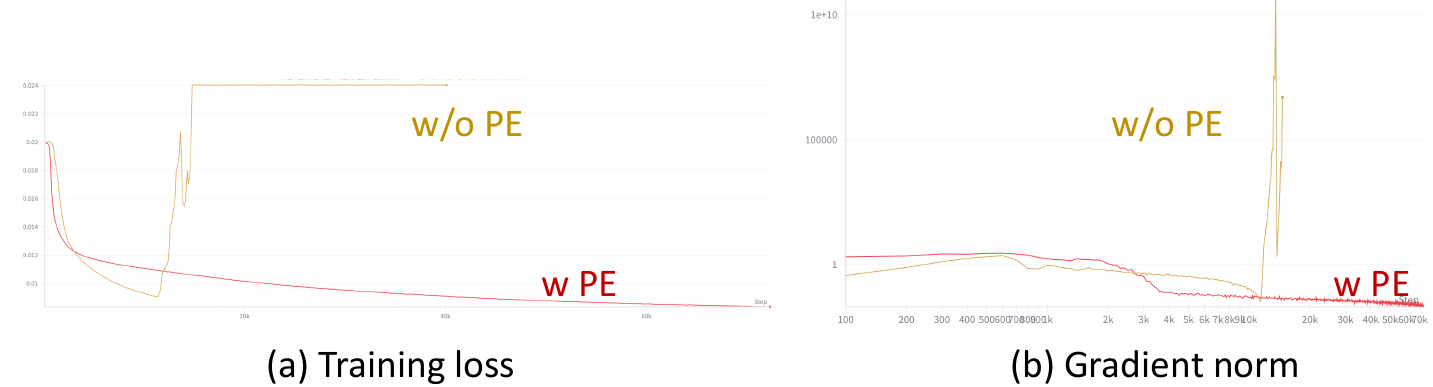}
    \caption{Impact of positional encoding.}
    \label{fig:pe}
\end{figure}

\subsection{The impact of positional encoding}
To validate the necessity of explicit spatial information, we conduct an ablation study by removing the Positional Encoding (PE) module. Figure~\ref{fig:pe} compares the training dynamics of the two settings.
As observed in Figure~\ref{fig:pe}(a), the model trained without PE exhibits severe instability, suffering from a sudden divergence in loss after the initial phase. This is further corroborated by the gradient norm analysis in Figure~\ref{fig:pe}(b), where the removal of PE leads to catastrophic gradient explosion (spiking to $10^{10}$).
In contrast, incorporating PE effectively stabilizes the optimization process, suppressing gradient spikes and ensuring smooth, monotonic convergence. This indicates that PE is critical for the model to correctly distinguish and assemble geometric primitives during the flow matching process.

\begin{figure*}
    \centering
    \includegraphics[width=\linewidth]{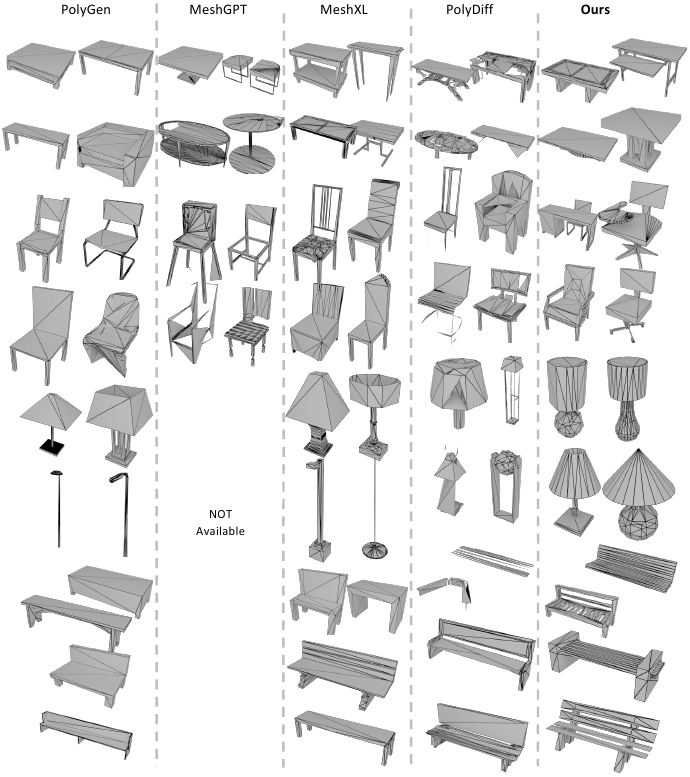}
    \caption{Extended comparison with the state-of-the-arts. We do not compare with MeshGPT in lamp/bench because of the missing checkpoint. We do not compare with PivotMesh since it does not release the checkpoint in shapenet categories,}
    \label{fig:more-comp}
\end{figure*}

\qs{
\section{More Limitations}
While MeshFlow demonstrates promising results, several limitations remain. First, as a proof-of-concept, our current model supports a maximum of 800 faces. It remains challenging to represent product-level meshes (e.g., $>10k$ faces) or to capture highly intricate curvatures and thin structures found in complex real-world data. Future work could explore the scalability of MeshFlow and more efficient network architectures to handle higher resolutions.
Further, our framework currently could not vary the number of generated faces. Our method is designed to generate a mesh given a predefined face number during inference. It is a promising future direction to dynamically predicting a task-aware face budget (e.g., via an auxiliary network that predicts whether a generated face is in the final mesh or not).
}

\qs{
Third, our framework relies on a post-processing stage (incorporating both learnable and traditional algorithms) to derive watertight meshes from raw triangle soups, a design choice primarily dictated by current computational constraints. Eliminating this requirement to achieve a unified, end-to-end generation process represents an elegant and important direction for future research.
}